\newcommand{\old}[1]{{}}
\newtheorem{theorem}{Theorem}[section]
\newtheorem{corollary}[theorem]{Corollary}
\newtheorem{lemma}[theorem]{Lemma}
\newtheorem{obs}[theorem]{Observation}
\renewcommand{\r}{{{\rho}}}
\renewcommand{\r}{{{\rho}}}
\newcommand{\mwst}{{minimum-weight sink tree }}
\title{Minimizing Total Interference in Asymmetric Sensor Networks
\footnote{This work was partially supported by Grant 2016116 from the United States -- Israel Binational Science Foundation.}}
\author{ A. Karim Abu-Affash\thanks{Software Engineering Department, Shamoon College of Engineering, Beer-Sheva 84100, Israel, {\tt abuaa1@sce.ac.il}.}
\and 
Paz Carmi\thanks{Department of Computer Science, Ben-Gurion University, Beer-Sheva 84105, Israel, {\tt carmip@cs.bgu.ac.il}.}
\and 
Matthew J. Katz\thanks{Department of Computer Science, Ben-Gurion University, Beer-Sheva 84105, Israel, {\tt matya@cs.bgu.ac.il}.}
} %end author
\begin{document}

\maketitle
%\noindent
%\nopagebreak

\begin{abstract}
The problem of computing a connected network with minimum interference is a fundamental problem in wireless sensor networks. Several models of interference have been studied in the literature. The most common model is the receiver-centric, in which the interference of a node $p$ is defined as the number of other nodes whose transmission range covers $p$. In this paper, we study the problem of assigning a transmission range to each sensor, such that the resulting network is strongly connected and the total interference of the network is minimized. 

For the one-dimensional case, we show how to solve the problem optimally in $O(n^3)$ time. For the two-dimensional case, we show that the problem is NP-complete and give a polynomial-time 2-approximation algorithm for the problem. 
\end{abstract}

%%%%%%%%%%%%%%%%%%%%%%%%%%%%%%%%%%%%%%%%%%%% Introduction %%%%%%%%%%%%%%%%%%%%%%%%%%%%%%%%%%%%
%%%%%%%%%%%%%%%%%%%%%%%%%%%%%%%%%%%%%%%%%%%%%%%%%%%%%%%%%%%%%%%%%%%%%%%%%%%%%%%%%%%%%%%%%%%%%%

\section{Introduction}

{\em Wireless sensor networks} have received significant attention in the last two decades due to their potential civilian and military applications~\cite{Estrin, Hubaux}. A wireless sensor network consists of numerous devices that are equipped with processing, memory, and wireless communication capabilities.
This kind of network has no pre-installed infrastructure, rather all communication is supported by multi-hop transmissions, where intermediate nodes relay packets between communicating parties.
Since each sensor has a limited battery and each transmission decreases the battery charge, energy consumption is a critical issue in wireless sensor networks. One well way to conserve energy is to minimize the interference of the network. 
High interference increases the probability of packet collisions and therefore packet retransmission, which can significantly affect the effectiveness and the lifetime of the network.

In wireless networks design, the nodes are modeled as a set of points in the plane and each node $u$ is assigned a transmission range $\r(u)$. A node $v$ can receive the signal transmitted by a node $u$ if and only if $|uv| \le \r(u)$, where $|uv|$ is the Euclidean distance between $u$ and $v$. 
%Given a set $P$ of points in the plane and a range assignment $r:P \to \mathbb{R}^+$, 
There are two common ways to model the induced communication graph: symmetric and asymmetric. 
In the symmetric model~\cite{Halldorsson08,Lam10,Moscibroda05,Tan11,Rickenbach05}, there is a non-directed edge between points $p$ and $q$ if and only if $|pq| \le \min \{\r(p), \r(q) \}$, and in the asymmetric model~\cite{Agrawal13,Bilo08,Brise15,Rickenbach09}, there is a directed edge from node $p$ to node $q$ if and only if $|pq| \le \r(p)$. 

Several models of interference have been studied in the literature.
The model of Burkhart et al.~\cite{Burkhart} measures the number of nodes affected by the communication on a single edge. This was also studied by Moaveni-Nejad and Li~\cite{Moaveni-nejad}, who further introduced the {\em sender-centric} model that measures the number of receiving nodes affected by the communication from a single sender. They showed that both problems can be solved optimally in polynomial-time, in the symmetric and asymmetric models.

Von Rickenbach et al.~\cite{Rickenbach09} argued that the sender-centric model of interference is not realistic, because the interference is actually felt by the receiver. Further, they argued that the model was overly sensitive to the addition of single nodes. Instead, they formulated the {\em receiver-centric} model; minimizing the maximum interference received at a node, where the interference received at a node $p$ is the number of nodes $q$ ($q \ne p$) such that $|pq|\le \r(q)$.
In the one-dimensional case, i.e., when the points are located on a line, Von Rickenbach et al.~\cite{Rickenbach09} showed that one can always construct a network with $O(\sqrt{n})$ interference, in the symmetric model. Moreover, they showed that there exists an instance that requires $\Omega(\sqrt{n})$ interference, and gave an $O(n^{1/4})$ approximation algorithm for this case. Tan et al.~\cite{Tan11} proved that the optimal network has some interesting properties and, based on these properties, one can compute a network with minimum interference in sub-exponential time.
Brise et al.~\cite{Brise15} extend this result for the asymmetric model.

In the two-dimensional case, the problem has been shown to be NP-hard~\cite{Brise15,Buchin08}. Halld{\'o}rsson and Tokuyama~\cite{Halldorsson08} showed how to construct a network with $O(\sqrt{n})$ interference for the symmetric model, extending the result of~\cite{Rickenbach09}. For the asymmetric model, Fussen et al.~\cite{Fussen05} showed that one can always construct a network with $O(\log{n})$ interference and showed that their algorithm is asymptotically optimal.

Another model of interference is to minimize the total interference of the network. That is, given a set $P$ of points in the plane, the goal is to assign ranges to the points to obtain a connected communication graph in which the total interference is minimized. For the symmetric model, Moscibroda and Wattenhofer~\cite{Moscibroda05} studied the problem in general metric graphs. They showed that the problem is NP-complete and cannot be approximated within $O(\log{n})$, and they gave an $O(\log{n})$ approximation algorithm for this problem. Lam et al.~\cite{Lam10} proved that the problem is NP-complete for points in the plane. For the one-dimensional case, Tan et al.~\cite{Tan11} showed how to compute an optimal network in $O(n^4)$ time.

For the asymmetric model, Bil{\`o} and Proietti~\cite{Bilo08} studied the problem of minimizing the total interference in general metric graphs. They gave a logarithmic approximation algorithm for the problem by reducing it to the power assignment problem in general graphs. Agrawal and Das~\cite{Agrawal13} studied the problem in two-dimensions. They gave two heuristics with experimental results.

\paragraph{\textbf{Our results.}} In this paper, we consider the problem of minimizing the total interference in the asymmetric model. We first give an $O(n^3)$-time algorithm that solves the problem optimally, when the points are located on a line. Then, we show that the problem is NP-complete for points in the plane and give a 2-approximation algorithm for the problem. 
Our approach in the solution of the one-dimensional problem is somewhat unconventional and involves assigning to each point both a left range and a right range, see Section~\ref{sec:1d}. We note that the approach of Tan et al.~\cite{Tan11} (who considered the problem in the symmetric model) would yield a significantly worse time bound in our case, and that it would be interesting to check whether our approach can be applied also in their setting to obtain an improved running time.

%%%%%%%%%%%%%%%%%%%%%%%%%%%%%%%%%%%%%%%%%%%% Preliminaries %%%%%%%%%%%%%%%%%%%%%%%%%%%%%%%%%%%%
%%%%%%%%%%%%%%%%%%%%%%%%%%%%%%%%%%%%%%%%%%%%%%%%%%%%%%%%%%%%%%%%%%%%%%%%%%%%%%%%%%%%%%%%%%%%%%

\section{Network Model and Problem Definition}

Let $P$ be a set of sensors in the plane. Each sensor $p \in P$ is assigned a transmission range $\r(p)$. 
A sensor $q$ can receive a signal from a sensor $p$ if and only if $q$ lies in the transmission area of $p$.
We consider $P$ as a set of points in the plane and the range assignment to the sensors as a function $\r:P \to \mathbb{R}^+$. The communication graph induced by $P$ and $\r$ is a \emph{directed} graph $G_\r=(P,E)$, such that $E=\{(p,q):|pq| \leq \r(p) \}$, where $|pq|$ is the Euclidean distance between $p$ and $q$. $G_\r$ is \emph{strongly connected} if, for every two points $p,q\in P$, there exists a directed path from $p$ to $q$ in $G_\r$. A range assignment $\r$ is called \emph{valid} if the induced graph $G_\r$ is strongly connected.

Given a communication graph $G_\r=(P,E)$, in the \emph{receiver-centric} interference model, the interference of a point $p$, denoted by $RI(p)$, is defined as the number of points in $P \setminus \{p\}$ whose transmission range covers $p$, i.e., $RI(p)=|\{q \in P \setminus \{p\} :|pq| \leq \r(q) \}|$.
In the \emph{sender-centric} interference model, the interference of a point $p$, denoted by $SI(p)$, is defined as the number of points in $P \setminus \{p\}$ within the transmission range of $p$, i.e., $SI(p)=|\{q \in P \setminus \{p\} :|pq| \leq \r(p) \}|$. The \emph{total interference} of $G_\r$, denoted by $I(G_\r)$, is defined as 
$$I(G_\r) = \sum_{p\in P}RI(p) = \sum_{p\in P}SI(p) \ .$$

In the \emph{Minimum Total Interference} ($MTIP$) problem, we are given a set $P$ of points in the plane and the goal is to find a range assignment $\r$ to the points of $P$, such that the graph $G_\r$ induced by $P$ and $\r$ is strongly connected and $I(G_\r)$ is minimized.

%%%%%%%%%%%%%%%%%%%%%%%%%%%%%%%%%%%%%%%%%%% Section 3 %%%%%%%%%%%%%%%%%%%%%%%%%%%%%%%%%%%%%%%%%%%%%%%
%%%%%%%%%%%%%%%%%%%%%%%%%%%%%%%%%%%%%%%%%%%%%%%%%%%%%%%%%%%%%%%%%%%%%%%%%%%%%%%%%%%%%%%%%%%%%%%%%%%%%%%%%%%
%%%%%%%%%%%%%%%%%%%%%%%%%%%%%%%%%%%%%%%%%%%%%%%%%%%%%%%%%%%%%%%%%%%%%%%%%%%%%%%%%%%%%%%%%%%%%%%%%%%%%%%%%%%

\section{\emph{MTIP} in 1D}\label{sec:1d}

In this section, we present an exact algorithm that solves \emph{MTIP} in $O(n^3)$ time in 1D.
Let $P=\{p_1,p_2,\dots,p_n\}$ be a set of $n$ points located on a horizontal line. 
For simplicity, we assume that for every $i<j$, $p_i$ is to the left of $p_j$.
%Moreover, throughout this section, if $i < j$, then we will say that $p_i < p_j$.
%For simplicity, we assume that each point $p_i$ is denoted as its $x$-coordinate and $p_1 < p_2 < \dots < p_n$. 

For each $1 \le i \le j \le n$, let $P_{[i,j]} \subseteq P$ be the set $\{p_i,p_{i+1},\dots ,p_j\}$. 
A \emph{sink} tree $T_{[i,j]}^x$ of $P_{[i,j]}$ rooted at $p_x$, where $x \in \{i,j\}$, is a directed tree that contains a directed path from each point $p \in P_{[i,j]}\setminus \{p_x\}$ to $p_x$.
Let $\r_{[i,j]}^x$ be a range assignment to the points in $P_{[i,j]}$, such that the graph induced by $\r_{[i,j]}^x$ contains a sink tree $T_{[i,j]}^x$ of $P_{[i,j]}$ rooted at $p_x$, and let $I(T_{[i,j]}^x)$ denote the total interference of $T_{[i,j]}^x$. 
We say that $\r_{[i,j]}^x$ is an optimal range assignment to the points in $P_{[i,j]}$ if and only if $I(T_{[i,j]}^x)$ is minimized.
%We first show how to compute optimal range assignments $\r_{[i,j]}^x$ to the points of $P_{[i,j]}$, for every $1 \le i \le j \le n$ and for every $x \in \{i,j\}$. 

Let $G=(P,E)$ be the complete directed graph on $P$. For each directed edge $(p_i,p_j) \in E$, we assign a weight $w(p_i,p_j) =|\{ p_k\in P\setminus \{p_i\}: |p_ip_k| \le |p_ip_j| \}|.$
Let $G_{[i,j]}$ be the subgraph of $G$ induced by $P_{[i,j]}$.
Let $T_{[i,j]}^x$ be a sink tree rooted at $p_x$ in $G_{[i,j]}$ and let $w(T_{[i,j]}^x) = \sum_{(p,q) \in T_{[i,j]}^x} w(p,q)$ denote its weight.
Let $\r_{[i,j]}^x$ be a range assignment to the points of $P_{[i,j]}$, such that its induced graph contains a sink tree $T_{[i,j]}^x$ rooted at $p_x$.
Observe that if $\r_{[i,j]}^x$ is an optimal range assignment to the points of $P_{[i,j]}$, then its induced graph contains a unique sink tree $T_{[i,j]}^x$.
\begin{lemma} \label{lemma:optsink}
$\r_{[i,j]}^x$ is an optimal range assignment to the points of $P_{[i,j]}$ if and only if $T_{[i,j]}^x$ is a minimum-weight sink tree rooted at $p_x$ in $G_{[i,j]}$.
\end{lemma}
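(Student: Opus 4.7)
My proof plan has two main steps, each resting on the definition of the edge weight $w(p,q)$.

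\textbf{Step 1 (the pointwise inequality).} First I would show that for every range assignment $\r_{[i,j]}^x$ whose induced graph contains a sink tree $T_{[i,j]}^x$ rooted at $p_x$,
\[
 I(T_{[i,j]}^x) \;=\; \sum_{p \in P_{[i,j]}} SI(p) \;\ge\; w(T_{[i,j]}^x),
\]
with equality iff each non-root point $p$ is assigned range exactly $|p\,q|$, where $q$ is its parent in $T_{[i,j]}^x$ (and $\r(p_x)=0$). The argument is immediate from definitions: for every non-root $p$ with parent $q$ in $T_{[i,j]}^x$, the edge $(p,q)$ must lie in the induced graph, so $\r(p) \ge |pq|$, and therefore
\[
 SI(p) \;=\; \bigl|\{p_k \in P\setminus\{p\}: |p\,p_k|\le \r(p)\}\bigr| \;\ge\; \bigl|\{p_k \in P\setminus\{p\}: |p\,p_k|\le |pq|\}\bigr| \;=\; w(p,q).
\]
Summing over the non-root points of $T_{[i,j]}^x$ gives $\sum SI(p) \ge \sum_{(p,q) \in T_{[i,j]}^x} w(p,q) = w(T_{[i,j]}^x)$. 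The equality condition is clear from the same chain.

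\textbf{Step 2 (matching optimum values).} Next I would use Step~1 in both directions. For achievability, given any sink tree $T$ of $P_{[i,j]}$ rooted at $p_x$, the range assignment $\r(p) = |p,\mathrm{parent}_T(p)|$ for $p \ne p_x$ and $\r(p_x)=0$ induces a graph containing $T$ and attains $I(T) = w(T)$. Combining with Step~1,
\[
 \min_{\r_{[i,j]}^x} I(T_{[i,j]}^x) \;=\; \min_{T} w(T),
\]
where the left minimum is over valid range assignments on $P_{[i,j]}$ (inducing some sink tree at $p_x$) and the right is over all sink trees of $P_{[i,j]}$ rooted at $p_x$.

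\textbf{Step 3 (the iff).} For the forward direction, suppose $\r_{[i,j]}^x$ is optimal; by the observation preceding the lemma its induced graph contains a unique sink tree $T_{[i,j]}^x$, and Step~1 gives $\min w = \min I = I(T_{[i,j]}^x) \ge w(T_{[i,j]}^x) \ge \min w$, forcing equality throughout, so $T_{[i,j]}^x$ is a minimum-weight sink tree. For the reverse direction, if $T_{[i,j]}^x$ is a minimum-weight sink tree, then the range assignment from Step~2 attains interference $w(T_{[i,j]}^x) = \min w = \min I$, so it is optimal.

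The only delicate point is ensuring that for an optimal $\r_{[i,j]}^x$ the ranges are exactly the parent-distances in the induced tree; this follows from the equality clause of Step~1 together with the uniqueness observation, since any slack in some $\r(p)$ would either break the iff's well-definedness or violate optimality by strictly increasing some $SI(p)$.
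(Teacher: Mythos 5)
Your proposal is correct and follows essentially the same route as the paper's proof: both establish the correspondence $I = w$ between the interference of an assignment and the weight of its induced sink tree, and both show achievability by the canonical assignment $\r(p) = |p\,q|$ for the parent $q$ (with $\r(p_x)=0$). Your Step~1 is in fact slightly more careful than the paper, which asserts $I(T_{[i,j]}^x) = w(T_{[i,j]}^x)$ outright where strictly only $I \ge w$ holds for an arbitrary assignment containing the tree; your explicit inequality-with-equality-condition closes that small gap.
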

\begin{proof}
$T_{[i,j]}^x$ is a sink tree in $G_{[i,j]}$ and, by the way we assigned weights to the edges of $G$, $I(T_{[i,j]}^x)$ is equal to the weight of $T_{[i,j]}^x$ in $G_{[i,j]}$, i.e., $I(T_{[i,j]}^x) = w(T_{[i,j]}^x)$. On the other hand, let $T$ be a sink tree rooted at $p_x$ in $G_{[i,j]}$ of weight $w(T)$. We define a range assignment $\r'$ to the points of $P_{[i,j]}$ as follows: Set $\r'(p_x)=0$, and for each $(p_k,p_l) \in T$, set $\r'(p_k)=|p_kp_l|$. Consider the graph induced by $\r'$. Clearly, this graph contains $T$ and $I(T) = w(T)$. This implies that $\r_{[i,j]}^x$ is an optimal range assignment to the points of $P_{[i,j]}$ if and only if $T_{[i,j]}^x$ is a minimum-weight sink tree rooted at $p_x$ in $G_{[i,j]}$.
\end{proof}

By Lemma~\ref{lemma:optsink}, to compute an optimal range assignment $\r_{[i,j]}^x$, it is sufficient to compute a minimum-weight sink tree rooted at $p_x$ in $G_{[i,j]}$. In Section~\ref{sec:allSinks}, we show how to compute a minimum-weight sink tree $T_{[i,j]}^x$ in $G_{[i,j]}$, for every $1 \le i \le j \le n$ and for every $x \in \{i,j\}$. Then, in Section~\ref{sec:MTIP}, we use these trees to devise a dynamic programming algorithm that solves \emph{MTIP} in $O(n^3)$ time. 
%Throughout this section, we will refer to a minimum-weight sink tree as \emph{\textsf{mwST}}.

%%%%%%%%%%%%%%%%%%%%%%%%%%%%%%%%%%%%%%%%%%%%%%%%%%%%%%%%%%%%%%%%%%%%%%%%%%%%%%%%%%%%%%%%%%%%%%%%%%%%%%%%%%%
\subsection{Computing all sink trees} \label{sec:allSinks}
In this section, we show how to compute a \mwst rooted at $p_i$ and a \mwst rooted at $p_j$ in $G_{[i,j]}$, for every $1 \le i \le j \le n$, in $O(n^3)$ time.
The following observation follows from the way we constructed $G$.
%Throughout this section, we will refer to a minimum-weight sink tree as \textsc{mwST}.
%
\begin{obs} \label{obs:monotonic}
Let  $p_i$, $p_j$, and $p_k$ be three points of $P$. If $|p_i p_j| < |p_i p_k|$, then 
$w(p_i,p_j) < w(p_i,p_k)$.
\end{obs}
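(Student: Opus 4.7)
The plan is to unpack the definition of $w$ and exhibit an explicit set inclusion between the two counting sets, then pinpoint a single element that witnesses the strict inequality.

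First, I would write out the two quantities explicitly from their definitions: set
\[
A = \{p_\ell \in P \setminus \{p_i\} : |p_ip_\ell| \le |p_ip_j|\},
\qquad
B = \{p_\ell \in P \setminus \{p_i\} : |p_ip_\ell| \le |p_ip_k|\},
\]
so that $w(p_i,p_j) = |A|$ and $w(p_i,p_k) = |B|$. Since $|p_ip_j| < |p_ip_k|$, any point $p_\ell \in A$ satisfies $|p_ip_\ell| \le |p_ip_j| < |p_ip_k|$, hence $p_\ell \in B$. This yields the weak inclusion $A \subseteq B$, and therefore $w(p_i,p_j) \le w(p_i,p_k)$.

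The remaining step is to promote this to strict inequality by producing an element of $B \setminus A$. The natural witness is $p_k$ itself: clearly $p_k \in P \setminus \{p_i\}$ (since $|p_ip_k| > |p_ip_j| \ge 0$ forces $p_k \ne p_i$), and $|p_ip_k| \le |p_ip_k|$, so $p_k \in B$. However, $|p_ip_k| > |p_ip_j|$ shows $p_k \notin A$. Hence $A \subsetneq B$, and because these are finite sets of points, $|A| < |B|$, which is exactly $w(p_i,p_j) < w(p_i,p_k)$.

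There is no real obstacle here; the statement is essentially a direct unfolding of the weight definition. The only point worth being careful about is the strict inequality, which would fail under $\le$ rather than $<$ in the hypothesis, so the proof should emphasize that $p_k$ itself serves as the distinguishing element precisely because the hypothesis is strict. No case analysis based on the one-dimensional arrangement of the points is needed, since the argument uses only the Euclidean distances; in particular, the observation holds in any dimension.
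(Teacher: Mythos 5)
Your proof is correct and matches the paper's reasoning: the paper states this observation without proof, noting only that it ``follows from the way we constructed $G$,'' and your argument is exactly the intended unfolding of the definition of $w$, with $p_k$ itself witnessing the strict inclusion of the two counting sets.
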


Let $T_{[i,j]}^x$ be a \mwst rooted at $p_x$ in $G_{[i,j]}$, where $x \in \{i,j\}$,
 and let $OPT_{[i,j]}^x$ denote its weight. 
%For each point $p_k \in P_{i,j}$, let $d_{in}(p_k)$ and $d_{out}(p_k)$ denote the in-degree and the out-degree of $p_k$ in $T_{[i,j]}^i$, respectively.
The following lemma reveals the special structure of $T_{[i,j]}^x$.

\begin{lemma} \label{lemma:noCrossing}
For every two distinct edges $(p_k,p_l)$ and $(p_{k'},p_{l'})$ in $T_{[i,j]}^x$, 
\begin{enumerate}
	\item[$(i)$] if $k < {k'} < l$, then  $k \leq {l'} < l$; and   
  \item[$(ii)$] if $k < {l'} < l$, then  $k < {k'} < l$.
\end{enumerate}  
\end{lemma}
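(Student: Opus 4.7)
The plan is to argue by contradiction via a local edge-exchange. Assume that $T = T_{[i,j]}^x$ is a minimum-weight sink tree rooted at $p_x$ that nevertheless contains two distinct edges $e_1 = (p_k, p_l)$ and $e_2 = (p_{k'}, p_{l'})$ violating (i) or (ii). The goal is to exhibit a sink tree on the same vertex set rooted at $p_x$ whose total weight is strictly smaller than $w(T)$, contradicting the minimality guaranteed by Lemma~\ref{lemma:optsink}.

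The basic operation will be an \emph{edge redirection}: replacing one tree edge $(p_a, p_b)$ by $(p_a, p_c)$ with $c\ne a$. This preserves the property that every non-root vertex has exactly one outgoing edge, so the result is a sink tree rooted at $p_x$ as long as no directed cycle is created --- which happens precisely when $p_c$ is not a proper descendant of $p_a$ in $T$. By Observation~\ref{obs:monotonic}, whenever $|p_a p_c| < |p_a p_b|$ the redirection strictly decreases the total weight. The bulk of the proof will then be a case analysis on the relative positions of $k, k', l, l'$ along the line: for (i), since $k < k' < l$ forces $k < l$, the violations split into $l' = l$, $l' > l$, and $l' < k$; for (ii), the case $k' = k$ is immediately impossible (since $p_k$ would then have two outgoing edges in $T$), leaving $k' < k$, $k' = l$, and $k' > l$. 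In each sub-case the geometry on the line pins down a first-choice redirection: one of the two candidate heads provably lies strictly between the chosen tail and the other candidate head, so by Observation~\ref{obs:monotonic} the redirection strictly shortens one of $e_1, e_2$.

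The main obstacle is that a first-choice redirection can fail the cycle test when the intended new head is already a descendant of its tail in $T$. The key observation is that the very ancestry relation that blocks the swap forces additional structure on $T$, and this structure enables a backup swap. For instance, in the sub-case $k < k' < l < l'$ of (i), if the redirection $(p_{k'}, p_{l'}) \to (p_{k'}, p_l)$ fails, then $p_{k'}$ must be an ancestor of $p_l$ in $T$, hence also an ancestor of $p_k$ (since $p_k$'s parent is $p_l$); this immediately makes the backup redirection $(p_k, p_l) \to (p_k, p_{k'})$ both strictly weight-decreasing (since $k < k' < l$) and cycle-free (since $p_k$ cannot simultaneously be an ancestor of $p_{k'}$). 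Analogous backup redirections --- typically of the form $(p_k, p_l) \to (p_k, p_{k'})$ or $(p_k, p_l) \to (p_k, p_{l'})$ --- resolve the remaining sub-cases in the same spirit, and verifying that at least one redirection succeeds in every case yields the desired contradiction.
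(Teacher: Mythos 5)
Your proposal is correct and takes essentially the same route as the paper's proof: a contradiction via weight-decreasing edge redirections justified by Observation~\ref{obs:monotonic}, with the acyclicity obstruction handled by noting that a blocked primary swap forces an ancestry relation that validates a backup swap --- the paper packages this identical idea by first ruling out a path from $p_l$ to $p_{k'}$ (resp.\ from $p_{l'}$ to $p_k$) before performing the redirection. Your worked sub-case $k<k'<l<l'$ is exactly the paper's Case~2 of part~$(i)$, and the remaining sub-cases do go through as you claim (you even cover $k'=l$ in part~$(ii)$, which the paper's case split glosses over).
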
 

\begin{proof}
\textcolor{white}{zzzz}
\begin{enumerate}
	\item[$(i)$] Assume towards a contradiction that there exist two edges $(p_k,p_l)$ and $(p_{k'},p_{l'})$, such that $k < {k'} < l$, and ${l'} < k $ or ${l'} \geq l$; see Figure~\ref{fig:noCrossing}.   
Notice that there is no path from $p_l$ to $p_{k'}$ in $T_{[i,j]}^x$, 
otherwise, by replacing $(p_k,p_l)$ by $(p_k,p_k')$ in $T_{[i,j]}^x$, we obtain a sink tree rooted at $p_x$ in $G_{[i,j]}$ of weight less than $OPT_{[i,j]}^x$, which contradicts the minimality of $T_{[i,j]}^x$.
Thus, the path from $p_k$ to $p_x$ in $T_{[i,j]}^x$ does not contain $p_{k'}$.
We distinguish between two cases. \\
\textbf{Case~1:}  ${l'} < k$; see Figure~\ref{fig:noCrossing}(a). By Observation~\ref{obs:monotonic}, $w(p_{k'},p_k) < w(p_{k'},p_{l'})$. Therefore, by replacing $(p_{k'},p_{l'})$ by $(p_{k'},p_k)$ in $T_{[i,j]}^x$, we obtain a sink tree rooted at $p_x$ in $G_{[i,j]}$ of weight less than $OPT_{[i,j]}^x$, which contradicts the minimality of $T_{[i,j]}^x$. \\
%\noindent
\textbf{Case~2:} ${l'} \geq l$. If ${l'} = l$, then, by Observation~\ref{obs:monotonic}, $w(p_k,p_{k'}) < w(p_{k},p_{l})$; see Figure~\ref{fig:noCrossing}(b). Therefore, by replacing $(p_{k},p_{l})$ by $(p_k,p_{k'})$ in $T_{[i,j]}^x$, we obtain a sink tree rooted at $p_x$ in $G_{[i,j]}$ of weight less than $OPT_{[i,j]}^x$, which contradicts the minimality of $T_{[i,j]}^x$. And, if ${l'} > l$, then, by Observation~\ref{obs:monotonic}, $w(p_{k'},p_l) < w(p_{k'},p_{l'})$; see Figure~\ref{fig:noCrossing}(c). Therefore, by replacing $(p_{k'},p_{l'})$ by $(p_{k'},p_l)$ in $T_{[i,j]}^x$, we obtain a sink tree rooted at $p_x$ in $G_{[i,j]}$ of weight less than $OPT_{[i,j]}^x$, which contradicts the minimality of $T_{[i,j]}^x$. 
\begin{figure}[ht]
	\centering
			\includegraphics[width=0.9\textwidth]{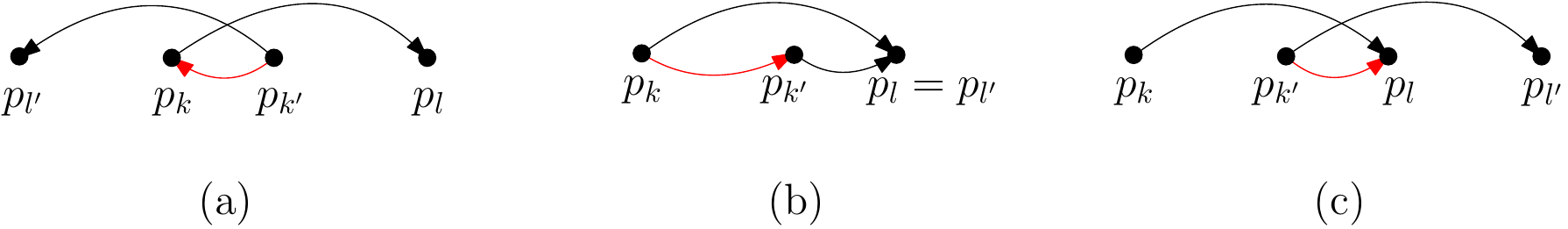}
	\caption{Illustration of the proof of Lemma~\ref{lemma:noCrossing}.} 
	\label{fig:noCrossing}
\end{figure}
%
%\vspace{-0.1cm}
	\item[$(ii)$] Assume towards a contradiction that there exist two edges $(p_k,p_l)$ and $(p_{k'},p_{l'})$, such that $k < {l'} < l$, and ${k'} \le k $ or ${k'} \ge l$; see Figure~\ref{fig:noCrossing1}. We distinguish between two cases. \\ 
\textbf{Case~1:} ${k'} < k$; see Figure~\ref{fig:noCrossing1}(a). In this case, there is no path from $p_{l'}$ to $p_{k}$ in $T_{[i,j]}^x$, otherwise, by replacing $(p_{k'},p_{l'})$ by $(p_{k'},p_k)$ in $T_{[i,j]}^x$, we obtain a sink tree rooted at $p_x$ in $G_{[i,j]}$ of weight less than $OPT_{[i,j]}^x$, which contradicts the minimality of $T_{[i,j]}^x$. 
Thus, the path from $p_{k'}$ to $p_x$ in $T_{[i,j]}^x$ does not contain $p_{k}$. 
Moreover, by Observation~\ref{obs:monotonic}, $w(p_{k},p_{l'}) < w(p_{k},p_{l})$. Therefore, by replacing $(p_{k},p_{l})$ by $w(p_{k},p_{l'})$ in $T_{[i,j]}^x$, we obtain a sink tree rooted at $p_x$ in $G_{[i,j]}$ of weight less than $OPT_{[i,j]}^x$, which contradicts the minimality of $T_{[i,j]}^x$. \\
\textbf{Case~2:} ${k'} > l$; see Figure~\ref{fig:noCrossing1}(b).
Notice that, since $T_{[i,j]}^x$ is a sink tree, either no path from $p_{l'}$ to $p_k$ or 
no path from $p_l$ to $p_{k'}$ exists in $T_{[i,j]}^x$ (otherwise, $T_{[i,j]}^x$ contains a cycle). 
Assume, w.l.o.g., that there is no path from $p_l$ to $p_{k'}$ in $T_{[i,j]}^x$. Thus, the path from $k$ to $x$ in $T_{[i,j]}^x$ does not contain the point $p_{k'}$.
By Observation~\ref{obs:monotonic}, $w(p_{k'},p_{l}) < w(p_{k'},p_{l'})$. Therefore, by replacing $(p_{k'},p_{l'})$ by $(p_{k'},p_{l})$ in $T_{[i,j]}^x$, we obtain a sink tree rooted at $p_x$ in $G_{[i,j]}$ of weight less than $OPT_{[i,j]}^x$, which contradicts the minimality of $T_{[i,j]}^x$. 
\end{enumerate}
\begin{figure}[ht]
	\centering
			\includegraphics[width=0.77\textwidth]{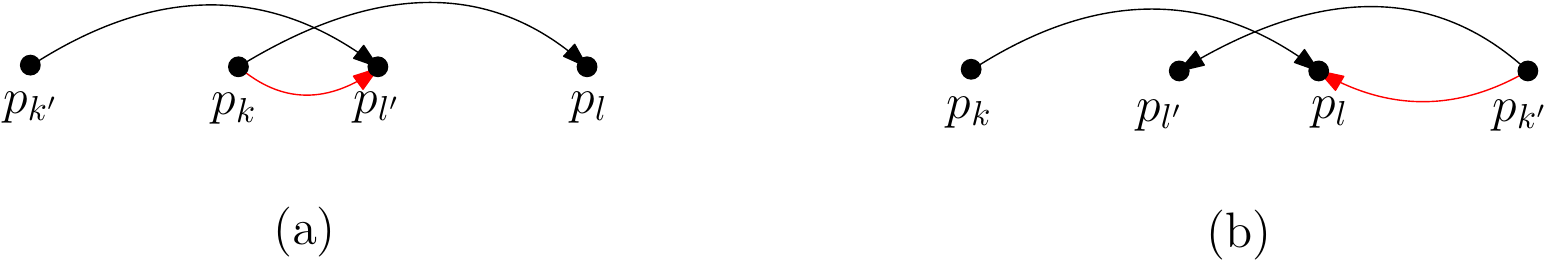}
	\caption{Illustration of the proof of Lemma~\ref{lemma:noCrossing}.} 
	\label{fig:noCrossing1}
\end{figure}
\end{proof}

%\vspace{-0.2cm}
Consider a \mwst $T_{[i,j]}^i$ of weight $OPT_{[i,j]}^i$ rooted at $p_i$ in $G_{[i,j]}$.
Let $(p_j,p_k)$ be the edge connecting $p_j$ to $p_k$ in $T_{[i,j]}^i$. Thus, $(p_j,p_k)$ partitions $T_{[i,j]}^i$ into two sub-trees $T^i$ rooted at $p_i$ and $T^j$ rooted at $p_j$; see Figure~\ref{fig:opt}. 
By Lemma~\ref{lemma:noCrossing}, $T^i$ contains the points of $P_{[i,k]}$ and $T^j$ contains the points of $P_{[k+1,j]}$. Moreover, since $T_{[i,j]}^i$ is a \mwst rooted at $p_i$ in $G_{[i,j]}$, $T^i$ is a \mwst rooted at $p_i$ in $G_{[i,k]}$ and $T^j$ is a \mwst rooted at $p_j$ in $G_{[k+1,j]}$. Therefore, $OPT_{[i,j]}^i = OPT_{[i,k]}^i + w(p_j,p_k) + OPT_{[k+1,j]}^j$. Similarly, $OPT_{[i,j]}^j = OPT_{[i,k-1]}^i + w(p_i,p_k) + OPT_{[k,j]}^j$.
\begin{figure}[ht]
	\centering
			\includegraphics[width=0.82\textwidth]{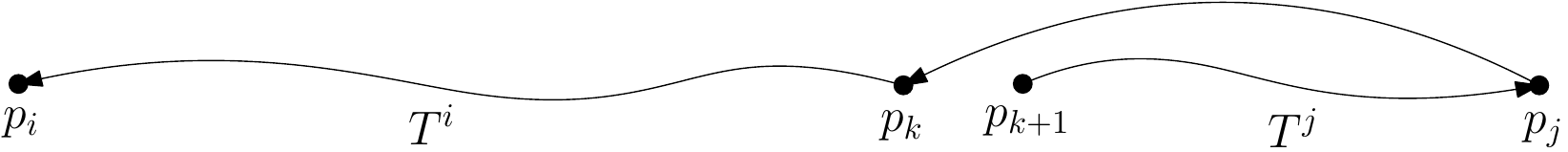}
	\caption{A \mwst $T_{[i,j]}^i$ rooted at $p_i$ . $(p_j,p_k)$ partitions $T_{[i,j]}^i$ into $T^i$ and $T^j$.} 
	\label{fig:opt}
\end{figure}
% 

%\vspace{-0.2cm}
Based on the aforementioned, to compute $OPT_{[i,j]}^i$ (resp., $OPT_{[i,j]}^j$), we compute $OPT_{[i,k]}^i + w(p_j,p_k) + OPT_{[k+1,j]}^j$ (resp., $OPT_{[i,k-1]}^i + w(p_i,p_k) + OPT_{[k,j]}^j$), for each $i \le k <j$ (resp., for each $i < k \le j$), and take the minimum over these values. That is,
%In general, for each sub-problem defined by the graph $G'_{i,j}$, we compute $OPT_{[i,j]}^i$ and $OPT_{[i,j]}^j$ by    

$$
OPT_{[i,j]}^i =  
\begin{cases} 
\quad \quad \quad  0 \quad \quad \quad \quad \quad \quad \quad \quad \quad \quad \quad \quad \quad \quad \quad \quad \ \ : \text{if }  i=j
\\ \underset{i \le k < j}{\min} \{OPT_{[i,k]}^i + w(p_j,p_k) + OPT_{[k+1,j]}^j\} \quad \quad \  :  \text{otherwise,}
\end{cases}
$$
and
$$
OPT_{[i,j]}^j =  
\begin{cases} 
\quad \quad \quad  0 \quad \quad \quad \quad \quad \quad \quad \quad \quad \quad \quad \quad \quad \quad \quad \quad \ \ : \text{if }  i=j
\\ \underset{i < k \le j}{\min} \{OPT_{[i,k-1]}^i + w(p_i,p_k) + OPT_{[k,j]}^j\} \quad \quad \  :  \text{otherwise.}
\end{cases}
$$

We compute $OPT_{[i,j]}^i$ and $OPT_{[i,j]}^j$, for each $1 \le i \le j \le n$, using dynamic programming as follows. 
We maintain two tables $\stackrel{\leftarrow}{S}$ and $\stackrel{\rightarrow}{S}$ each of size $n \times n$, such that $\stackrel{\leftarrow}{S}[i,j] = OPT_{[i,j]}^i$ and $\stackrel{\rightarrow}{S}[i,j] = OPT_{[i,j]}^j$, for each $1 \le i \le j \le n$. We fill the tables using Algorithm~\ref{algo:sink}.
\floatname{algorithm}{Algorithm}
\begin{algorithm}[ht]
\caption{$ComputeAllSinks(G,w)$ } \label{algo:sink}
\begin{algorithmic}[1]

\STATE \textbf{for} each $i \leftarrow 1$ to $n$ \textbf{do} \\ 
\quad  $\stackrel{\leftarrow}{S}[i,i] \leftarrow 0$ \\
\quad  $\stackrel{\rightarrow}{S}[i,i] \leftarrow 0$ \\

\STATE \textbf{for} each $d \leftarrow 1$ to $n-1$ \textbf{do} \\
\quad   \textbf{for} each $i \leftarrow 1$ to $n-d$ \textbf{do} \\
		\quad \quad \ $j \leftarrow i+d$  \\
    \quad \quad \ $\stackrel{\leftarrow}{S}[i,j] \leftarrow \underset{i \le k < j}{\min} \{\stackrel{\leftarrow}{S}[i,k] + w(p_j,p_k) + \stackrel{\rightarrow}{S}[k+1,j] \}$ \\
    \quad \quad \ $\stackrel{\rightarrow}{S}[i,j] \leftarrow \underset{1 < k \le j}{\min} \{\stackrel{\leftarrow}{S}[i,k-1] + w(p_i,p_k) + \stackrel{\rightarrow}{S}[k,j]\}$ \\
	
\end{algorithmic}
\end{algorithm} 

Notice that, when we fill the cells $\stackrel{\leftarrow}{S}[i,j]$, all the cells $\stackrel{\leftarrow}{S}[i,k]$ and $\stackrel{\rightarrow}{S}[k+1,j]$, for each $i \le k < j$, are already computed, and when we fill the cells $\stackrel{\rightarrow}{S}[i,j]$, all the cells $\stackrel{\leftarrow}{S}[i,k-1]$ and $\stackrel{\rightarrow}{S}[k,j]$, for each $i < k \le j$, are already computed. Therefore, each cell in the table is computed in $O(n)$ time, and the whole table is computed in $O(n^3)$ time.

%%%%%%%%%%%%%%%%%%%%%%%%%%%%%%%%%%%%%%%%%%%%%%%%%%%%%%%%%%%%%%%%%%%%%

\subsection{Solving \emph{MTIP} in 1D} \label{sec:MTIP}
In this section, we present an $O(n^3)$-time algorithm that solves \emph{MTIP} in 1D. That is, 
given a set $P = \{p_1,p_2,\dots,p_n \}$ on an horizontal line,
the algorithm computes a range assignment $\r$ for $P$, such that the graph $G_{\r}$ induced by $\r$ is strongly connected and $I(G_{\r})$ is minimized. 
For simplicity, we assume that for every $i<j$, $p_i$ is to the left of $p_j$.
%For simplicity, we assume that each point $p_i \in P$ is denoted as its $x$-coordinate and $p_1 < p_2 < \dots < p_n$. 

Given a range assignment $\rho : P \rightarrow \mathbb{R}^+$, the interference 
of a point $p \in P$, denoted by $I_\r (p)$, is equal to 
the number of points in $P \setminus \{p\}$ of distance 
at most $\r (p)$ from $p$ (where $\r (p)$ is the range assigned to $p$), i.e., $I_\r (p)  = | \{ q \in P\setminus \{p\} : |pq| \leq \r (p) \}|$.
The cost of an assignment $\rho$, is defined as $cost(\rho) =  \sum_{p \in P} I_{\r} (p)$. 
Notice that, $cost(\r) = I(G_{\r})$, where $G_{\r}$ is the graph induced by $\r$.

Instead of assigning each point in $P$ a range,
we assign each point two directional ranges, 
\emph{left range assignment}, $\rho^l : P \rightarrow \mathbb{R}^+$,
and \emph{right range assignment}, $\rho^r : P \rightarrow \mathbb{R}^+$.
A pair of assignments $(\rho^l,\rho^r)$ is called a \emph{left-right assignment}.
Assigning a point $p \in P$ a left range $\rho^l(p)$ and a right range $\rho^r(p)$
implies that in the induced graph $G_{\rho^{lr}}$, $p$ can reach every point to its left up to distance $\rho^l(p)$
and every point to its right up to distance $\rho^r(p)$.
That is, $G_{\rho^{lr}}$  contains the directed edge $(p_i,p_j)$ if and only if one of the following holds:
(i) $i<j$ and $|p_i p_j| \leq \rho^r(p_i)$, or \ 
(ii) $j<i$ and $|p_i p_j| \leq \rho^l(p_i)$. 
A left-right assignment $(\rho^l,\rho^r)$ is called \emph{valid} if the graph induced by $(\rho^l,\rho^r)$ is strongly connected.
The cost of a left-right assignment $(\rho^l,\rho^r)$, is defined as
$cost(\rho^l,\rho^r) =  \sum_{p \in P}\max\{ I_{\r^l} (p) , I_{\r^r} (p) \} $.

Notice that each left-right assignment $(\rho^l,\rho^r)$ for $P$ can be converted to a 
range assignment $\rho$ with the same cost by assigning each point $p \in P$ a range $\rho(p)=\max\{\rho^l(p),\rho^r(p)\}$. 
On the other hand, each range assignment $\r$ for $P$ can be converted to a left-right assignment
with the same cost, by assigning each point $p \in P$, $\rho^l(p)=\rho^r(p)=\rho(p)$. To be more precise, either $\rho^l(p)$ or $\rho^r(p)$ should be reduced to $|pq|$, where $q$ is the farthest point in the appropriate direction (see Observation~\ref{obs:range_ij}).
Therefore, instead finding an optimal range assignment, our algorithm finds a left-right assignment of minimum cost. 

Given a left-right assignment $(\rho^l,\rho^r)$, let
$\stackrel{\leftarrow}{I_{\r^l}} (p_i)  = | \{ p_j \in P : j < i \text{ and } |p_ip_j| \leq \r^l (p_i) \}|$ and $\stackrel{\rightarrow}{I_{\r^r}} (p_i)  = | \{ p_j \in P : j > i \text{ and } |p_ip_j| \leq \r^r (p_i) \}|$.
In addition to the $cost$ function, we define % the function:
$cost'(\rho^l,\rho^r) =  \sum_{p \in P}(\stackrel{\leftarrow}{I_{\r^l}} (p)+\stackrel{\rightarrow}{I_{\r^r}} (p))$,
and refine the notion of \emph{optimal solution} to include
only solutions $(\rho^l,\rho^r)$ that minimize $cost'(\rho^l,\rho^r)$
among all solutions with minimum $cost(\rho^l,\rho^r)$.

\begin{obs} \label{obs:range_ij}
Let $(\rho^l, \rho^r)$ be an optimal solution. Then, for every point $p_i \in P$, $\rho^l(p_i)=|p_j p_i|$, for some $j \leq i$, and $\rho^r(p_i)=|p_i p_k|$, for some $k \ge i$.
\end{obs}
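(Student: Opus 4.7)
The plan is a standard exchange argument, shrinking $\rho^l(p_i)$ (resp.\ $\rho^r(p_i)$) down to the nearest ``tight'' value on the appropriate side and checking that neither $cost$ nor $cost'$ gets worse. I will only write out the argument for $\rho^l$; the argument for $\rho^r$ is symmetric.

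Suppose for contradiction that $\rho^l(p_i) \neq |p_j p_i|$ for every $j \leq i$. Let $j^\star$ be the largest index with $j^\star < i$ and $|p_{j^\star}p_i| \leq \rho^l(p_i)$, using the convention $j^\star := i$ (so $|p_{j^\star}p_i| = 0$) if no such index exists. Because $\rho^l(p_i)$ avoids all critical values $|p_j p_i|$ for $j \leq i$, we get the strict inequality $|p_{j^\star}p_i| < \rho^l(p_i)$. Define $\tilde{\rho}^l$ to agree with $\rho^l$ everywhere except that $\tilde{\rho}^l(p_i) := |p_{j^\star}p_i|$.

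The next three steps verify that $(\tilde{\rho}^l, \rho^r)$ is valid and optimal. First, strong connectivity is preserved: by the choice of $j^\star$, the set of points to the left of $p_i$ reachable from $p_i$ is $\{p_{j^\star}, p_{j^\star+1}, \ldots, p_{i-1}\}$ under both $\rho^l$ and $\tilde{\rho}^l$, and every other outgoing edge in $G_{\rho^{lr}}$ is unchanged, so the induced graph is identical. Second, $cost$ does not increase: since $\tilde{\rho}^l(p_i) \leq \rho^l(p_i)$, we have $I_{\tilde{\rho}^l}(p_i) \leq I_{\rho^l}(p_i)$, while every other $\max\{I_{\rho^l}(p), I_{\rho^r}(p)\}$ term is untouched. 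Third, $cost'$ is preserved: $\stackrel{\leftarrow}{I_{\tilde{\rho}^l}}(p_i) = \stackrel{\leftarrow}{I_{\rho^l}}(p_i)$ because both left-reach sets equal $\{p_{j^\star}, \ldots, p_{i-1}\}$, and the remaining terms of $cost'$ are unchanged.

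Hence $(\tilde{\rho}^l, \rho^r)$ is also an optimal solution in the lex sense, and satisfies $\tilde{\rho}^l(p_i) = |p_{j^\star}p_i|$, contradicting the assumption that no optimal solution has this property at $p_i$ (or, equivalently, exhibiting WLOG an optimal solution that does). The symmetric argument for $\rho^r$ completes the proof.

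The main obstacle I anticipate is purely cosmetic: carefully handling the degenerate case $j^\star = i$ and the convention $|p_i p_i| = 0$, and being explicit about the (slightly non-standard) point that the optimality hypothesis is a lexicographic one on $(cost, cost')$, which is what makes the shrinking step give a genuine contradiction rather than merely a no-worse solution. The combinatorial substance is just the observation that reducing a directional range to the distance of its farthest covered same-side neighbour cannot lose any previously reached same-side point, and can only drop points on the opposite side.
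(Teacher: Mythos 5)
The paper states this as an unproved Observation, so there is no ``paper proof'' to compare against; your shrinking argument is the standard justification and is essentially correct, including the two points that actually need care: that shrinking $\rho^l(p_i)$ to the distance of the farthest covered left neighbour can only drop \emph{opposite-side} points from $I_{\rho^l}(p_i)$ (so $cost$ cannot increase, while connectivity, the left-reach set, and hence $cost'$ are untouched), and that the argument really yields a WLOG normalization rather than a property of literally every optimal pair, since the modification is only weakly improving. One slip to fix: you define $j^\star$ as the \emph{largest} index $j<i$ with $|p_j p_i|\le\rho^l(p_i)$, but since $|p_j p_i|$ is decreasing in $j$ on $j<i$, the largest such index is simply $i-1$ whenever $p_{i-1}$ is in range, which would over-shrink the range and change the reachable set. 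You want the \emph{smallest} such index, i.e., the farthest covered left neighbour --- which is clearly what you intend, given that you describe the left-reach set as $\{p_{j^\star},\dots,p_{i-1}\}$ and later speak of the ``farthest covered same-side neighbour.'' With that correction the argument goes through.
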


\old{
\begin{obs} \label{obs:p_1}
In any optimal solution $(\rho^l, \rho^r)$, $\rho^l(p_1)=0$ and $\rho^r(p_1) > 0$.
\end{obs}
}

For every $1 \le i \le j \le n$, let $P_{[i,j]} \subseteq P$ be the set $\{p_i,p_{i+1},\dots ,p_j\}$.

\begin{lemma} \label{lemma:OPTprop}
There exists an optimal solution $(\rho^l, \rho^r)$ satisfying the following properties.
Let $p_i$ be a point in $P$, such that, $\rho^l(p_i)=|p_i p_j|$, for some $j < i-1$, and $\rho^r(p_i)=|p_i p_{j'}|$, for some ${j'} > i+1$. Then,
\begin{enumerate}
	\item[$(P1)$] for each point $p_k \in P_{[j+1,i-1]}$, $\r^l(p_k) < |p_kp_j|$; 
	\item[$(P2)$] for each point $p_k \in P_{[i+1,{j'}-1]}$, $\r^r(p_{k}) < |p_{k}p_{j'}|$;
	\item[$(P3)$] for each point $p_k \in P_{[j+1,i-1]}$,  $ \r^r(p_k) \le |p_kp_i|$;   and
	\item[$(P4)$] for each point $p_{k} \in P_{[i+1,{j'}-1]}$, $\r^l(p_{k}) \le |p_{k}p_i|$;
\end{enumerate}
\end{lemma}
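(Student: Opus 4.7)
The plan is to take a refined-optimal solution $(\rho^l,\rho^r)$ (one minimizing $cost$ and, among those, $cost'$) and to prove each of (P1)--(P4) by contradiction: if a property fails at some $p_k$, perform a local modification that preserves strong connectivity, does not increase $cost$, and strictly decreases $cost'$, contradicting the choice of $(\rho^l,\rho^r)$.

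For (P1), assume $\rho^l(p_k)\ge|p_kp_j|$ for some $p_k\in P_{[j+1,i-1]}$; then $p_k$ already reaches every point of $P_{[j,k-1]}$ through its left range. I would reduce $\rho^l(p_i)$ from $|p_ip_j|$ down to $|p_ip_k|$. The only edges lost from $G_{\rho^{lr}}$ are $(p_i,p_r)$ for $j\le r<k$, and each is restored by the two-hop route $p_i\to p_k\to p_r$, so strong connectivity survives. Since $\stackrel{\leftarrow}{I_{\rho^l}}(p_i)$ drops by $k-j\ge 1$, $cost'$ strictly decreases, while $I_{\rho^l}(p_i)$ can only shrink, so $\max\{I_{\rho^l}(p_i),I_{\rho^r}(p_i)\}$, and hence $cost$, does not grow. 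Property (P2) is the mirror image, obtained by reducing $\rho^r(p_i)$ to $|p_ip_k|$ for the offending $p_k\in P_{[i+1,j'-1]}$.

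For (P3), assume $\rho^r(p_k)>|p_kp_i|$ for some $p_k\in P_{[j+1,i-1]}$. By Observation~\ref{obs:range_ij}, $\rho^r(p_k)=|p_kp_m|$ for some $m>i$. I would reduce $\rho^r(p_k)$ to $|p_kp_i|$; the edges lost are $(p_k,p_r)$ for $i<r\le m$. Since $\rho^r(p_i)=|p_ip_{j'}|$ is intact, each such $p_r$ with $r\le j'$ is recovered by the two-hop path $p_k\to p_i\to p_r$; for $r$ with $j'<r\le m$, one follows $p_k\to p_i$ by any rightward route from $p_i$ to $p_r$ still present in the modified graph. Strong connectivity is preserved, $\stackrel{\rightarrow}{I_{\rho^r}}(p_k)$ drops by $m-i\ge 1$, and since $I_{\rho^r}(p_k)$ only shrinks, $cost$ does not grow. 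Property (P4) follows by the mirror-image argument.

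The main technical obstacle is exactly the strong-connectivity check for (P3) and (P4) in the case $m>j'$ (respectively, the symmetric case for (P4)). Here one must verify that the original graph admits a route from $p_i$ to $p_r$ that avoids the deleted edges; the argument proceeds by taking any such route and replacing every forbidden transition out of $p_k$ by the retained hop $p_k\to p_i$, and then arguing that the resulting walk can always be pruned into a simple path from $p_i$ to $p_r$ in the modified graph, because only edges of the form $(p_k,p_s)$ with $s>i$ have been removed.
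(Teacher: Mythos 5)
Your treatment of $(P1)$ and $(P2)$ coincides with the paper's: shrink $\r^l(p_i)$ to $|p_ip_k|$, recover the lost leftward edges via the two-hop route $p_i\to p_k\to p_r$, and observe that $cost$ cannot grow while $cost'$ strictly drops. That part is fine. The problem is $(P3)$ (and symmetrically $(P4)$), where you shrink $\r^r(p_k)$ from $|p_kp_m|$ down to $|p_kp_i|$. The connectivity argument you offer for the deleted edges $(p_k,p_r)$ with $j'<r\le m$ is circular: after replacing a forbidden transition $(p_k,p_s)$ by the retained hop $p_k\to p_i$, you are standing at $p_i$ and still owe a route from $p_i$ to $p_r$ that avoids the deleted edges --- which is precisely the statement you are trying to establish. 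Pruning a walk into a simple path only helps once you have a walk that actually terminates at $p_r$; your substitution never produces one. And the gap is not merely presentational: it is entirely possible that $p_k$'s long right range is the \emph{only} way any point to the left of $p_{j'}$ reaches $p_{j'+1},\dots,p_m$, in which case your modification genuinely destroys strong connectivity. Nothing in refined optimality obviously rules this out, so the burden of exhibiting a surviving path is on you, and it is not met.

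The paper avoids this trap by never touching $\r^r(p_k)$ in the proof of $(P3)$. Instead, assuming w.l.o.g.\ that $|p_jp_k|\le|p_ip_{l'}|$ (so that $p_k$'s right-range interference already dominates), it \emph{extends} $\r^l(p_k)$ to $|p_kp_j|$ --- free of charge in $cost$, since the max at $p_k$ is still realized by $I_{\r^r}(p_k)$ --- and only then shrinks $\r^l(p_i)$ to $|p_ip_k|$, with leftward connectivity now carried by $p_k$. The long rightward edge of $p_k$ survives, so reachability of $p_{j'+1},\dots,p_m$ is never in danger. Note also that this yields $cost'(\r'^l,\r'^r)\le cost'(\r^l,\r^r)$ rather than a strict decrease, which is why the lemma is phrased as ``there exists an optimal solution satisfying\dots'' and why the paper concludes for $(P3)$ that the modified assignment \emph{is} such a solution, rather than deriving a contradiction as in $(P1)$. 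If you want to salvage your route, you would have to prove separately that in a refined-optimal solution no point of $P_{[j+1,i-1]}$ can be the unique provider of rightward coverage beyond $p_{j'}$; absent that, you should switch to the paper's re-routing modification.
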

\begin{proof} 
\textcolor{white}{zzzz}
\begin{enumerate} 
	\item[$(P1)$] Assume towards a contradiction that there exists a point $p_k \in P_{[j+1,i-1]}$, such that $\r^l(p_k) = |p_kp_l| \ge |p_kp_j|$, for some $l \le j$; see Figure~\ref{fig:noCrossing2}(a,b). 
	Let $(\r'^l, \r'^r)$ be the assignment obtained from $(\rho^l, \rho^r)$ by assigning $p_i$ a range $\r'^l(p_i) = |p_ip_k|$. Thus, (i) the graph induced by $(\r'^l, \r'^r)$ is still strongly connected, (ii) $I_{\r'^l}(p_i) < I_{\r^l}(p_i)$ and $I_{\r'^r}(p_i) = I_{\r^r}(p_i)$, and (iii) $\stackrel{\leftarrow}{I_{\r'^l}} (p_i) < \stackrel{\leftarrow}{I_{\r^l}} (p_i)$. Therefore, $(\r'^l, \r'^r)$ is a valid assignment, $cost(\r'^l, \r'^r) \le cost(\r^l,\r^r)$, and $cost'(\r'^l, \r'^r) < cost'(\r^l,\r^r)$, which contradicts the minimality of $cost'(\r^l,\r^r)$.
	\item[$(P2)$] The proof is symmetric to the proof of $(P1)$.
\begin{figure}[ht]
	\centering
			\includegraphics[width=0.93\textwidth]{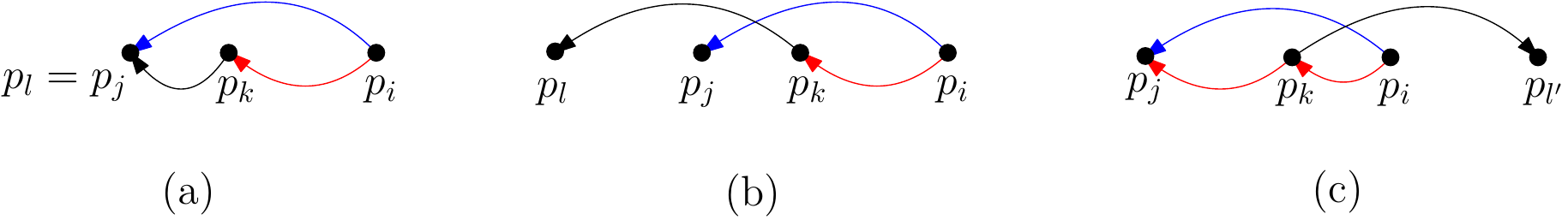}
	\caption{Illustration of  the proof of Lemma~\ref{lemma:OPTprop}.} 
	\label{fig:noCrossing2}
\end{figure}
		%
	%\vspace{-0.2cm}
	\item[$(P3)$] Assume that there exists a point $p_k \in P_{[j+1,i-1]}$, such that  
	$\r^r(p_k) = |p_kp_{l'}| > |p_kp_i|$, for some ${l'} > i$; see Figure~\ref{fig:noCrossing2}(c). 
	Assume, w.l.o.g., that $|p_jp_k| \le |p_ip_{l'}|$. %
	By $(P1)$, $\r^l(p_k) < |p_kp_j|$. 
	Let $(\r'^l, \r'^r)$ be the assignment obtained from $(\rho^l, \rho^r)$ by assigning $p_i$ a range $\r'^l(p_i) = |p_ip_k|$ and assigning $p_k$ a range $\r'^l(p_k) = |p_kp_j|$. Thus, (i) the graph induced by $(\r'^l, \r'^r)$ is still strongly connected, (ii) $I_{\r'^l}(p_i) \le I_{\r^l}(p_i)$ and $I_{\r'^l}(p_k) \le I_{\r^r}(p_k)$, and (iii) $\stackrel{\leftarrow}{I_{\r'^l}} (p_i) + \stackrel{\leftarrow}{I_{\r'^l}} (p_k) = \stackrel{\leftarrow}{I_{\r^l}} (p_i)$. Therefore, $(\r'^l, \r'^r)$ is a valid assignment, $cost(\r'^l, \r'^r) \le cost(\r^l,\r^r)$, and $cost'(\r'^l, \r'^r) \le cost'(\r^l,\r^r)$, which implies that $(\r'^l, \r'^r)$ is an optimal solution satisfying the lemma.
	%	%
	\item[$(P4)$] The proof is symmetric to the proof of $(P3)$.
\end{enumerate}
\end{proof}

Let $G=(P,E)$ be the complete directed graph on $P$, in which $w(p_i,p_j) =|\{ p_k\in P\setminus \{p_i\}: |p_ip_k| \le |p_ip_j| \}|$, for each directed edge $(p_i,p_j) \in E$.
Let $G_{[i,j]}$ be the subgraph of $G$ induced by $P_{[i,j]}$.
Let $(\r_{[i,j]}^l, \r_{[i,j]}^r)$ be an assignment for the points of $P_{[i,j]}$, such that the graph induced by $(\r_{[i,j]}^l, \r_{[i,j]}^r)$ contains a sink tree $T_{[i,j]}^x$ rooted at $p_x$, where $x \in \{i,j\}$.
%Let $w(T_{[i,j]}^x)$ denote the weight of $T_{[i,j]}^x$. 
In Lemma~\ref{lemma:optsink}, we proved that, for any $x \in \{i,j\}$, $(\r_{[i,j]}^l, \r_{[i,j]}^r)$ is an optimal assignment (i.e., $T_{[i,j]}^x$ is of minimum interference) if and only if $T_{[i,j]}^x$ is a \mwst rooted at $p_x$ in $G_{[i,j]}$. 
Combining this with Lemma~\ref{lemma:OPTprop}, we have the following corollary.

\begin{corollary} \label{cor:sink}
Let $(\r^l, \r^r)$ be an optimal solution satisfying the properties of Lemma~\ref{lemma:OPTprop}. Let $p_i$ be a point in $P$, such that, $\rho^l(p_i)=|p_i p_k|$, for some $k < i-1$, and $\rho^r(p_i)=|p_i p_j|$, for some $j > i+1$; see Figure~\ref{fig:sink}. Then, the graph induced by $(\r_{[k+1,i-1]}^l, \r_{[k+1,i-1]}^r)$ is a \mwst $T_{[k+1,i]}^i$ rooted at $p_i$ in $G_{[k+1,i]}$, and the graph induced by $(\r_{[i+1,j-1]}^l, \r_{[i+1,j-1]}^r)$ is a \mwst $T_{[i,j-1]}^i$ rooted at $p_i$ in $G_{[i,j-1]}$.
\end{corollary}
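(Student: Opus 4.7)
The plan is to combine the four properties $(P1)$--$(P4)$ of Lemma~\ref{lemma:OPTprop} with strong connectivity and a swap argument, and then invoke Lemma~\ref{lemma:optsink} to force minimality of the weight. I treat the left-hand sub-interval in detail; the right-hand claim is completely symmetric using $(P2)$ and $(P4)$.

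First, I would use $(P1)$ and $(P3)$ to show that the restricted assignment $(\rho^l_{[k+1,i-1]},\rho^r_{[k+1,i-1]})$ is confined to $P_{[k+1,i]}$. Property $(P1)$ gives $\rho^l(p_m)<|p_mp_k|$ for every $p_m\in P_{[k+1,i-1]}$, so no left-edge from such a point reaches $p_k$ or anything further left; property $(P3)$ gives $\rho^r(p_m)\le |p_mp_i|$, so no right-edge reaches beyond $p_i$. Hence every outgoing edge from a point of $P_{[k+1,i-1]}$ lands inside $P_{[k+1,i]}$. Invoking now the strong connectivity of $G_{(\rho^l,\rho^r)}$, every $p_m\in P_{[k+1,i-1]}$ must reach points outside $P_{[k+1,i]}$, but the only way to leave the block is via $p_i$. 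Therefore each $p_m$ possesses a directed path to $p_i$ inside $P_{[k+1,i]}$, so the subgraph induced on $P_{[k+1,i]}$ by the restricted assignment contains a sink tree $T$ rooted at $p_i$. Moreover, for the tree-edge $(p_m,p_{m'})$ out of $p_m$, at least one of $\rho^l(p_m),\rho^r(p_m)$ is $\ge|p_mp_{m'}|$, so $\max\{I_{\rho^l}(p_m),I_{\rho^r}(p_m)\}\ge w(p_m,p_{m'})$; summing over $p_m\in P_{[k+1,i-1]}$, the contribution of these points to $cost(\rho^l,\rho^r)$ is at least $w(T)$.

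The second step is a swap argument. Suppose $T$ is not a minimum-weight sink tree rooted at $p_i$ in $G_{[k+1,i]}$, and let $T^{\star}$ be one with $w(T^{\star})<w(T)$. I define $(\tilde\rho^l,\tilde\rho^r)$ to agree with $(\rho^l,\rho^r)$ outside $P_{[k+1,i-1]}$, and for each $p_m\in P_{[k+1,i-1]}$ with $T^{\star}$-parent $p_{m'}$, I set $\tilde\rho^l(p_m)=|p_mp_{m'}|$, $\tilde\rho^r(p_m)=0$ if $m'<m$, and symmetrically otherwise. Since $p_i$'s ranges are unchanged, $p_i$ still reaches every point of $P_{[k+1,i-1]}$; and because $T^{\star}$ is a sink tree rooted at $p_i$, every point of $P_{[k+1,i-1]}$ still reaches $p_i$. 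Any path in the original graph that crossed the block can be rerouted through $p_i$, so strong connectivity is preserved. The contribution of $P_{[k+1,i-1]}$ drops to exactly $w(T^{\star})<w(T)$, contradicting the optimality of $(\rho^l,\rho^r)$. Hence $T$ is a minimum-weight sink tree, i.e., $T_{[k+1,i]}^i$, and by Lemma~\ref{lemma:optsink} the restricted assignment is optimal for $P_{[k+1,i]}$. The right-side claim follows verbatim from $(P2)$ and $(P4)$.

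The main obstacle is making the swap step watertight: one must verify that altering the ranges inside $P_{[k+1,i-1]}$ does not break any connection used by points outside the block. This is precisely where the localisation from step one becomes indispensable — by $(P1)$ and $(P3)$ the inner ranges never participated in any outside-to-outside route except via $p_i$, so the modification is self-contained, and minimality of the resulting sink tree follows cleanly from Lemma~\ref{lemma:optsink}.
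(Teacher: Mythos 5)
Your proposal is correct and is essentially the argument the paper intends: the paper gives no explicit proof, merely stating that the corollary follows by "combining" Lemma~\ref{lemma:optsink} with Lemma~\ref{lemma:OPTprop}, and your write-up fills in exactly that combination — using $(P1)$/$(P3)$ to confine the block's out-edges to $P_{[k+1,i]}$, strong connectivity to extract a sink tree rooted at $p_i$, and a cut-and-paste optimality argument (the content of Lemma~\ref{lemma:optsink}) to force minimum weight. The symmetric treatment of the right-hand interval via $(P2)$/$(P4)$ matches as well.
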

\begin{figure}[ht]
	\centering
			\includegraphics[width=0.52\textwidth]{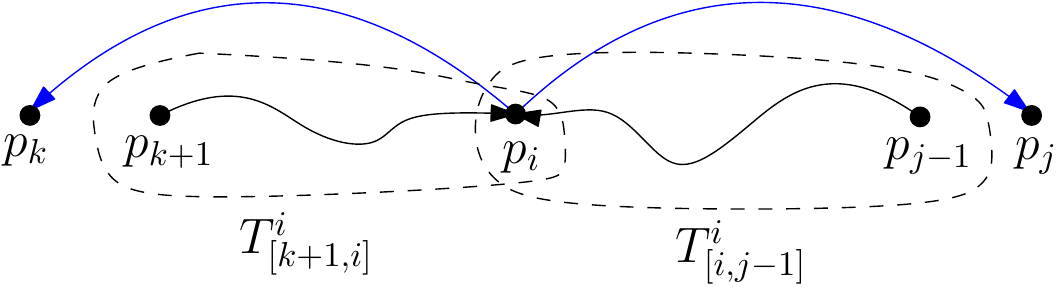}
	\caption{Illustration of Corollary~\ref{cor:sink}.} 
	\label{fig:sink}
\end{figure}

\begin{lemma} \label{lemma:P5}
Let $(\r^l, \r^r)$ be an optimal solution satisfying the properties of Lemma~\ref{lemma:OPTprop} and let $G_{\r^{lr}}$ be the strongly connected graph induced by $(\r^l,\r^r)$.
Let $p_i$ be a point in $P$, such that, $\rho^l(p_i)=|p_i p_k|$, for some $k < i$, and $\rho^r(p_i)=|p_i p_j|$, for some $j > i$. 
Then, for each point $p_{j'} \in P_{[j,n]}$, $\r^l(p_{j'}) \le |p_{j'}p_{j-1}|$.
	%and for each point $p_k \in P_{[{j'}+1,n]}$, $\r^l(p_k) \le |p_kp_{j'-1}|$.
\end{lemma}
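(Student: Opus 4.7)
The plan is to proceed by contradiction using a range-reduction exchange argument, in the same spirit as the proofs of Lemma~\ref{lemma:OPTprop}. Suppose some $p_{j'} \in P_{[j,n]}$ violates the claim, so $\r^l(p_{j'}) > |p_{j'}p_{j-1}|$; by Observation~\ref{obs:range_ij} we may write $\r^l(p_{j'}) = |p_{j'}p_m|$ for some $m \leq j-2$. I would then define a modified left-right assignment $(\r'^l, \r^r)$ that agrees with $(\r^l, \r^r)$ everywhere except $\r'^l(p_{j'}) = |p_{j'}p_{j-1}|$. The only edges lost from $G_{\r^{lr}}$ are the outgoing left edges $(p_{j'}, p_c)$ for $m \leq c \leq j-2$, while every other edge, including the retained direct edge $(p_{j'}, p_{j-1})$, survives.

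The central step is to verify that $G_{\r'^{lr}}$ remains strongly connected. For each lost edge $(p_{j'}, p_c)$, I would exhibit the alternative directed path
$$p_{j'} \to p_{j-1} \to \cdots \to p_i \to p_c,$$
where the first hop is the retained edge, the middle portion is the $p_{j-1} \to p_i$ path through the \mwst $T_{[i,j-1]}^i$ guaranteed by Corollary~\ref{cor:sink} (its edges depend only on the unchanged ranges of $P_{[i+1,j-1]}$), and the final hop is direct because $\r^l(p_i) = |p_ip_k|$ and $\r^r(p_i) = |p_ip_j|$ together cover every $p_c$ with $k \leq c \leq j$ and $c \neq i$.

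The cost comparison is immediate: $\stackrel{\leftarrow}{I_{\r'^l}}(p_{j'}) = j' - (j-1)$ while $\stackrel{\leftarrow}{I_{\r^l}}(p_{j'}) = j' - m$, so $cost'$ strictly decreases by $j-1-m \geq 1$, and since $I_{\r'^l}(p_{j'}) \leq I_{\r^l}(p_{j'})$ and all other quantities are unchanged, $\max(I^l, I^r)$ at $p_{j'}$ does not grow and $cost(\r'^l, \r^r) \leq cost(\r^l, \r^r)$. This contradicts the optimality of $(\r^l, \r^r)$ as the minimum-$cost'$ solution among minimum-$cost$ solutions.

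The main obstacle is the sub-case $m < k$, in which the reroute above fails for indices $c \in [m, k-1]$ because $p_i$ has no direct edge to $p_c$. I expect to close this gap by a strongly-connected-component argument: letting $S$ be the SCC of $G_{\r'^{lr}}$ containing $p_i$, the direct edges out of $p_i$ together with the two sink trees of Corollary~\ref{cor:sink} force $P_{[k,j-1]} \subseteq S$; the retained edge $(p_{j'}, p_{j-1})$ together with preserved right-going edges within $P_{[j,n]}$ force $P_{[j,n]} \subseteq S$; and finally $P_{[1,k-1]} \subseteq S$ by following the original strongly-connected paths into $P_{[1,k]}$, whose internal edges are unaffected by the modification. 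If this last absorption step needs reinforcement, one can instead use the compound modification that also sets $\r'^l(p_i) = |p_ip_m|$, which makes $p_i \to p_c$ direct for every $c \in [m, k-1]$ and yields a net change in $cost'$ of $-(j-1-m) + (k-m) = -(j-1-k) < 0$, still producing the required contradiction after a careful case analysis of the $\max$ values at $p_i$ and $p_{j'}$.
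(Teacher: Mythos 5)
You have correctly isolated where the difficulty lies, and your exchange argument does work in the sub-case $m \ge k$: there the reroute $p_{j'} \to p_{j-1} \to \cdots \to p_i \to p_c$ is sound (the middle portion uses only the unchanged ranges of $P_{[i+1,j-1]}$, and $p_i$ reaches every $p_c$ with $k \le c \le j$), the $\max$ at $p_{j'}$ cannot grow, and $cost'$ strictly drops. This is a genuinely different route from the paper, which for $m \ge i$ does not modify the assignment at all: it applies $(P1)$ to $p_{j'}$ and $(P2)$ to $p_i$ to show that every point of $P_{[m+1,j-1]}$ has both ranges too short to leave that interval, so the \emph{original} graph could not have been strongly connected. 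Your version buys uniformity with the style of Lemma~\ref{lemma:OPTprop}; the paper's version avoids having to re-verify connectivity after an edge deletion.

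The sub-case $m<k$, however, is a genuine gap, and neither of your two fixes closes it. The SCC absorption argument breaks at the step ``$P_{[1,k-1]} \subseteq S$ by following the original strongly-connected paths'': the only edges entering $P_{[1,k-1]}$ from outside are left-going edges, and a priori the deleted edges $(p_{j'},p_c)$ with $m \le c \le k-1$ could be the \emph{only} such edges, so appealing to connectivity of $G_{\r^{lr}}$ is circular (the same objection applies to the absorption of $P_{[j,n]}$). The compound modification controls $cost'$ but not $cost$: extending $\r^l(p_i)$ to $|p_ip_m|$ can push $I_{\r'^l}(p_i)$ above $I_{\r^r}(p_i)$ and raise the term $\max\{I_{\r^l}(p_i),I_{\r^r}(p_i)\}$ by an unbounded amount, while the term at $p_{j'}$ need not decrease at all (its $\max$ may be attained by $I_{\r^r}(p_{j'})$); so $cost$ can strictly increase and no contradiction with the refined optimality follows. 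The paper closes this case by a structural, not an exchange, argument: first, $(P1)$ applied to $p_{j'}$ forces $m<k$ whenever $m<i$; then Corollary~\ref{cor:sink} applied to $p_{j'}$ says the graph induced by the ranges of $P_{[m+1,j'-1]}$ is a sink tree rooted at $p_{j'}$, in which the non-root node $p_i$ has a single outgoing edge --- contradicting the hypothesis that both $\r^l(p_i)>0$ and $\r^r(p_i)>0$. You would need to import this (or an equivalent structural fact) to finish; as written, the proposal does not prove the lemma.
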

\begin{proof}
Assume towards a contradiction that there exists a point $p_{j'} \in P_{[j,n]}$, such that 
$\r^l(p_{j'}) = |p_{j'}p_l| \ge |p_{j'}p_{j-2}|$. We distinguish between two cases. \\
\textbf{Case~1:} $l \ge i$; see Figure~\ref{fig:noCrossing3}. 
Let $p_t \in P_{[l+1,j-1]}$. By $(P1)$ in Lemma~\ref{lemma:OPTprop}, $\r^l(p_t) < |p_tp_{l}|$, and by $(P3)$ in Lemma~\ref{lemma:OPTprop}, $\r^r(p_t) < |p_tp_{j}|$. 
Thus, the points in $P_{[l+1,j-1]}$ could not be connected to the points in $P_{[1,n]} \setminus P_{[l+1,j-1]}$, which contradicts connectivity $G_{\r^{lr}}$.
%
%\vspace{-0.1cm}
\begin{figure}[ht]
	\centering
			\includegraphics[width=0.68\textwidth]{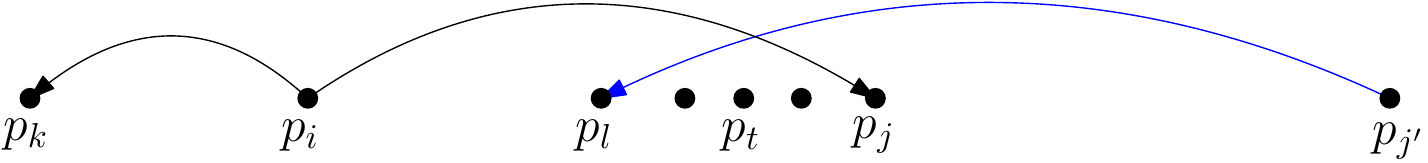}
	\caption{Illustration of Case~1 in the proof of Lemma~\ref{lemma:P5}.} 
	\label{fig:noCrossing3}
\end{figure}

%\vspace{-0.1cm}
\noindent
\textbf{Case~2:} $ l < i$.
By $(P1)$ in Lemma~\ref{lemma:OPTprop}, $l < k$; see Figure~\ref{fig:noCrossing4}. 
By Corollary~\ref{cor:sink}, the graph induced by $(\r_{[l+1,{j'}-1]}^l, \r_{[l+1,{j'}-1]}^r)$ is a \mwst $T_{[l+1,j']}^i$ rooted at $p_{j'}$.
Therefore, since $(\r^l, \r^r)$ is of minimum $cost'$, either $\rho^l(p_i)=0$ or $\rho^r(p_i)=0$, which contradicts the assumption that $\rho^l(p_i) > 0$ and $\rho^r(p_i) > 0$. 
\end{proof}
%
%\vspace{-0.2cm}
\begin{figure}[ht]
	\centering
			\includegraphics[width=0.63\textwidth]{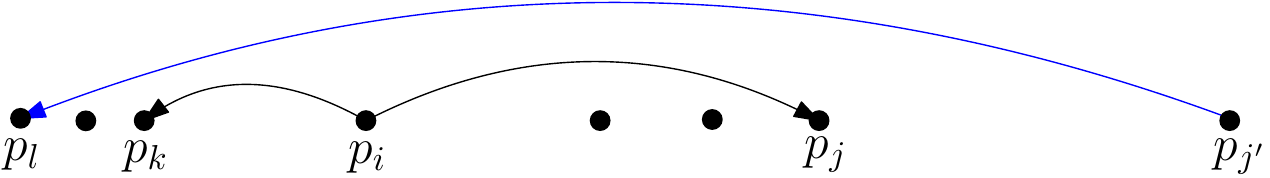}
	\caption{Illustration of Case~2 in the proof of Lemma~\ref{lemma:P5}.} 
	\label{fig:noCrossing4}
\end{figure}
%

%\vspace{-0.2cm}
For each $1 \le k \le i \le n$, let $OPT(i,k)$ denote the cost of an optimal solution $(\r_{[i,n]}^l, \r_{[i,n]}^r)$ for the sub-problem defined by the set $P_{[i,n]}$, in which $\r^l(p_i) = |p_ip_k|$; see Figure~\ref{fig:opt2}. Therefore, the cost of an optimal solution for the whole problem is $OPT(1,1)$. 
For each $i \le j \le n$, let $\Delta(i,j,k) = \max \{0,w(p_i,p_j)-w(p_i,p_k)\}$ denote the difference between $w(p_i,p_j)$ and $w(p_i,p_k)$. That is, 
$$
\Delta(i,j,k) = 
\begin{cases} 
 \quad w(p_i,p_j)-w(p_i,p_k)  \quad \quad \quad  : \text{if }  |p_ip_j| > |p_ip_k|
\\ \quad \quad \quad \quad 0 \quad \quad \quad \quad \quad \quad \quad \quad  :  \text{otherwise.} 
\end{cases}
$$

If $i=n$, then, clearly $OPT(i,k) = 0$. Otherwise, $\r^r(p_i) > 0$ and thus there exists a point $p_j \in P_{[i+1,n]}$, such that $\r^r(p_i) = |p_ip_j|$, and, by Lemma~\ref{lemma:OPTprop} and Lemma~\ref{lemma:P5}, there exists a point $p_t \in P_{[j,n]}$, such that $\r^l(p_t) = |p_tp_{j-1}|$. Moreover, for $i+1 < j < t$, by Corollary~\ref{cor:sink}, the graph induced by $(\r_{[i+1,j-1]}^l, \r_{[i+1,j-1]}^r)$ is a \mwst $T_{[i,j-1]}^i$ rooted at $p_i$ in $G_{[i,j-1]}$, and the graph induced by $(\r_{[j,t-1]}^l, \r_{[j,t-1]}^r)$ is a \mwst $T_{[j,t]}^t$ rooted at $p_t$ in $G_{[j,t]}$; see Figure~\ref{fig:opt2}.
%Let $w(T_{[i,j]}^x)$ denote the weight of $T_{[i,j]}^x$.
If $j=i+1$ and $t=j$, then $w(T_{[i,j-1]}^i) = w(T_{[j,t]}^t) = 0$.
Therefore, \\ 
$OPT(i,k) = \Delta(i,j,k) + w(T_{[i,j-1]}^i) + w(p_t,p_{j-1}) + w(T_{[j,t]}^t) + OPT(t,j-1).$
\begin{figure}[ht]
	\centering
			\includegraphics[width=0.72\textwidth]{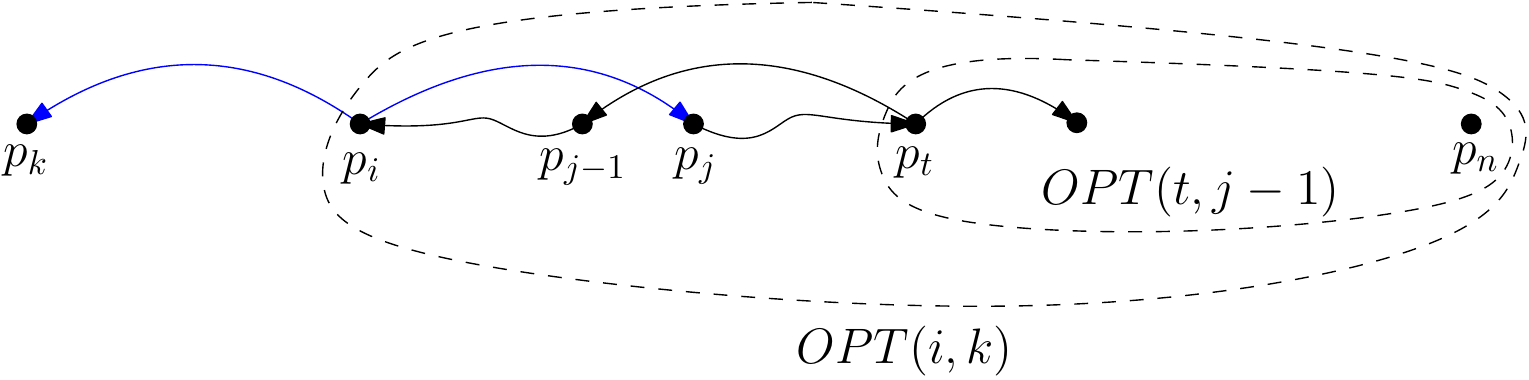}
	\caption{Computing $OPT(i,k)$.} 
	\label{fig:opt2}
\end{figure}

%\vspace{-0.2cm}
Based on the aforementioned, to compute $OPT(i,k)$, we compute $\Delta(i,j,k) + w(T_{[i,j-1]}^i) + w(T_{[j,t]}^t) + w(p_t,p_{j-1}) + OPT(t,j-1)$, for each $i < j \le n$ and for each $j \le t \le n$, and take the minimum over these values. That is, if $i=n$, then $OPT(i,k) = 0$, otherwise 
$$
OPT(i,k) = \underset{j \le t \le n}{\underset{i < j \le n}{\min}}
 \left\{ \Delta(i,j,k) + w(T_{[i,j-1]}^i) + w(T_{[j,t]}^t) + w(p_t,p_{j-1}) + OPT(t,j-1) \right\}.
$$
For each $i < j \le n$, let $C(j) = \underset{j \le t \le n}{\min} \left\{ w(T_{[j,t]}^t) + w(p_t,p_{j-1}) + OPT(t,j-1) \right\}$ and observe that $C(j)$ depends only on $t$.
Therefore, $$OPT(i,k) = \underset{i < j \le n}{\min}  \left\{ \Delta(i,j,k) + w(T_{[i,j-1]}^i) + C(j) \right\}.$$ 

%%%%%%%%%%%%%%%%%%%%%%%%%%%%%%%%%%%%%%%%%%%%%%%%%%%%%%%%%%%%%%%%%%%%%%%%%%
\old{ 
$$
OPT(i,k) =  
\begin{cases} 
\quad \quad \quad \quad \quad \quad  \ 0 \quad \quad \quad \quad \quad \quad \quad \quad \quad \quad \quad \quad \quad \quad \quad \quad \quad \quad \quad \quad \quad \quad \quad \quad \ \ : \text{if }  i=n
\\ \underset{i < j \le n}{\underset{j \le t \le n}{\min}}
 \{ \Delta(i,j,k) + w(T_{[i,j-1]}^i) + w(T_{[j,t]}^t) + w(p_t,p_{j-1}) + OPT(t,j-1) \} \quad \  :  \text{otherwise.} 
\end{cases}
$$
}
%%%%%%%%%%%%%%%%%%%%%%%%%%%%%%%%%%%%%%%%%%%%%%%%%%%%%%%%%%%%%%%%%%%%%%%%%%%%

We compute $OPT(i,k)$, for each $1 \le k \le i \le n$, using dynamic programming as follows. 
We maintain two tables $M$ of size $n \times n$ and $C$ of size $1 \times n$, such that $M[i,k] = OPT(i,k)$, for each $1 \le k \le i \le n$, and $C[j] = C(j)$, for each $1 \le j \le n$. In Section~\ref{sec:allSinks}, we computed two tables $\stackrel{\leftarrow}{S}$ and $\stackrel{\rightarrow}{S}$ each of size $n \times n$, such that $\stackrel{\leftarrow}{S}[i,j] = w(T_{[i,j]}^i)$ and $\stackrel{\rightarrow}{S}[i,j] = w(T_{[i,j]}^j)$, for each $1 \le i \le j \le n$. 
Algorithm~\ref{algo:MTIP} uses these tables to fill the table $M$ and returns $M[1,1] = OPT(1,1)$.  
\floatname{algorithm}{Algorithm}
\begin{algorithm}[ht]
\caption{$SolveMTIP(G,w)$ } \label{algo:MTIP}
\begin{algorithmic}[1]

\STATE \textbf{for} each $i \leftarrow n$ to $1$ \textbf{do} \\
	\quad   $C[i] \leftarrow \infty$ \\
	\quad   \textbf{for} each $k \leftarrow 1$ to $i$ \textbf{do} \\
			\quad \quad \ $M[i,k] \leftarrow \infty$  \\

\STATE \textbf{for} each $k \leftarrow 1$ to $n$ \textbf{do} \\ 
	\quad  $M[n,k] \leftarrow 0$ \\ 

\STATE \textbf{for} each $i \leftarrow n-1$ to $1$ \textbf{do} \\
	\quad  \textbf{for} each $j \leftarrow i+1$ to $n$ \textbf{do} \\
	\quad  \quad  \textbf{for} each $t \leftarrow j$ to $n$ \textbf{do} \\
	\quad  \quad  \quad \ $C[j] \leftarrow \min \{ C[j] \ , \ \stackrel{\rightarrow}{S}[j,t] + w(p_t,p_{j-1}) + M[t,j-1] \}$ \\
	\quad  \quad  \textbf{for} each $k \leftarrow 1$ to $i$ \textbf{do} \\
	\quad  \quad  \quad \ \textbf{if} $|p_ip_j| > |p_ip_k|$ \textbf{then} \\
	\quad  \quad  \quad \ \quad \ $\Delta \leftarrow w(p_i,p_j) - w(p_i,p_k)$  \\
	\quad  \quad  \quad \ \textbf{else} \\
	\quad  \quad  \quad \ \quad \ $\Delta \leftarrow 0$  \\		
  \quad  \quad  \quad  $M[i,k] \leftarrow \min \{ M[i,k] \ , \ \Delta \ + \stackrel{\leftarrow}{S}[i,j-1] + C[j] \}$ \\

\STATE \textbf{return} $M[1,1]$
    	
\end{algorithmic}
\end{algorithm}

Notice that, when we fill the cell $C[j]$, the cells $M[t,j-1]$ are already computed, for each $i < t \le n$ and for each $1 < j \le i$. Moreover, when we fill the cell $M[i,k]$, the cell $C[j]$ is already computed. 
Since for each $i< j \le n$, the cell $C[j]$ is filled $n-1$ times (for each $1 \le i < n$) and computed by taking the minimum over $\stackrel{\rightarrow}{S}[j,t] + w(p_t,p_{j-1}) + M[t,j-1]$, for each $j \le t \le n$, the total time for filling the table $C$ is $O(n^3)$.
Moreover, each cell $M[i,k]$ is computed by taking the minimum over $\Delta + \stackrel{\leftarrow}{S}[i,j-1] + C[j]$, for each $i < j \le n$. Thus, each cell $M[i,k]$ is computed in $O(n)$ time, and the whole table is computed in $O(n^3)$ time. Therefore, the total running time of Algorithm~\ref{algo:MTIP} is $O(n^3)$.

The following theorem summarizes the result of this section.
\begin{theorem} \label{thm:opt}
Let $P$ be a set of $n$ points located on a horizontal line. Then, one can compute in $O(n^3)$ time a range assignment $\r$ to the points of $P$, such that the induced graph $G_\r$ is strongly connected and its total interference is minimized.
\end{theorem}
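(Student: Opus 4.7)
The plan is to combine the machinery assembled throughout Section~\ref{sec:1d} into a dynamic program whose correctness rests on the structural lemmas already proved, and then bound its running time. The first move is to switch from range assignments $\r$ to left--right assignments $(\r^l, \r^r)$: as noted before Observation~\ref{obs:range_ij}, these two models have the same optimal cost (each converts to the other without loss), so it suffices to compute a minimum-$cost$ valid left--right assignment; ties are broken by $cost'$, which lets us invoke the refined optimality properties.

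Next, I would show that an optimal $(\r^l, \r^r)$ admits a clean recursive decomposition. Fix an optimum satisfying the properties of Lemma~\ref{lemma:OPTprop}. For each $1 \le k \le i \le n$, define $OPT(i,k)$ as the minimum cost of any valid left--right assignment for $P_{[i,n]}$ under the constraint $\r^l(p_i)=|p_ip_k|$; the answer we want is $OPT(1,1)$. Consider the point $p_j$ with $\r^r(p_i)=|p_ip_j|$. By Lemma~\ref{lemma:OPTprop} and Lemma~\ref{lemma:P5}, there is a unique point $p_t\in P_{[j,n]}$ whose left range $\r^l(p_t)=|p_tp_{j-1}|$ bridges the gap, and by Corollary~\ref{cor:sink} the assignments restricted to $P_{[i+1,j-1]}$ and $P_{[j,t-1]}$ realize \mwst s $T^i_{[i,j-1]}$ and $T^t_{[j,t]}$ in $G_{[i,j-1]}$ and $G_{[j,t]}$ respectively. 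The only cost contribution at $p_i$ beyond what is already paid for $\r^l(p_i)$ is the extra interference $\Delta(i,j,k)$ caused by enlarging its range from $|p_ip_k|$ out to $|p_ip_j|$ (enlargements on the left were already counted in defining $OPT(i,k)$). This yields the recurrence
\[
OPT(i,k)=\min_{i<j\le n}\ \Bigl\{\Delta(i,j,k)+w(T^i_{[i,j-1]})+C(j)\Bigr\},\qquad
C(j)=\min_{j\le t\le n}\ \Bigl\{w(T^t_{[j,t]})+w(p_t,p_{j-1})+OPT(t,j-1)\Bigr\},
\]
with base case $OPT(n,k)=0$, and $\Delta(i,j,k)=\max\{0,\,w(p_i,p_j)-w(p_i,p_k)\}$. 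The key observation is that $C(j)$ does not depend on $(i,k)$, so it can be amortized across the table.

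Given the recurrence, the algorithm proceeds in two phases. First, call $ComputeAllSinks$ (Algorithm~\ref{algo:sink}) to fill $\stackrel{\leftarrow}{S}$ and $\stackrel{\rightarrow}{S}$, giving $w(T^i_{[i,j]})$ and $w(T^j_{[i,j]})$ for every pair $(i,j)$ in $O(n^3)$ total time. Then run $SolveMTIP$ (Algorithm~\ref{algo:MTIP}), iterating $i$ from $n-1$ down to $1$, updating $C[j]$ by scanning $t\in[j,n]$ (using the already-computed $M[t,j-1]$), and then updating each $M[i,k]$ by scanning $j\in(i,n]$. Correctness follows inductively from the recurrence and Lemma~\ref{lemma:optsink}, which guarantees that the sub-tree weights in $\stackrel{\leftarrow}{S}$, $\stackrel{\rightarrow}{S}$ are exactly the minimum interference costs of the corresponding sub-problems. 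Once $OPT(1,1)$ is computed, the actual assignment $\r$ is recovered by standard back-pointer bookkeeping and by setting $\r(p)=\max\{\r^l(p),\r^r(p)\}$, which preserves $cost$.

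For the running time, the $C$-table is refreshed at most $n$ times (once per outer iteration on $i$), each refresh scanning $O(n)$ values of $t$ for each of $O(n)$ values of $j$, totaling $O(n^3)$; each $M[i,k]$ cell is computed by a minimum over $O(n)$ choices of $j$, and there are $O(n^2)$ such cells, again $O(n^3)$. Adding the $O(n^3)$ preprocessing gives the claimed bound. The main obstacle, and the place that genuinely required Lemmas~\ref{lemma:OPTprop} and~\ref{lemma:P5} together with Corollary~\ref{cor:sink}, is to justify that the optimal structure really is captured by the triple $(i,j,t)$ with independent sink trees on $P_{[i,j-1]}$ and $P_{[j,t]}$ and a single ``long'' left edge from $p_t$ to $p_{j-1}$; without this non-crossing/bridge structure one cannot amortize via $C(j)$, and the running time would blow up.
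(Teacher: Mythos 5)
Your proposal follows essentially the same route as the paper: the same reduction to left--right assignments, the same structural lemmas (Lemma~\ref{lemma:OPTprop}, Corollary~\ref{cor:sink}, Lemma~\ref{lemma:P5}) justifying the $(i,j,t)$ decomposition, the identical recurrence with $\Delta(i,j,k)$ and the amortized $C(j)$ table, and the same two-phase $O(n^3)$ dynamic program. The argument is correct as stated.
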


%%%%%%%%%%%%%%%%%%%%%%%%%%%%%%%%%%%%%%%%%%%% Section 4 %%%%%%%%%%%%%%%%%%%%%%%%%%%%%%%%%%%%%%%
%%%%%%%%%%%%%%%%%%%%%%%%%%%%%%%%%%%%%%%%%%%%%%%%%%%%%%%%%%%%%%%%%%%%%%%%%%%%%%%%%%%%%%%%%%%%%%
\section{\emph{MTIP} in 2D }\label{sec:NP}

In this section, we prove that \emph{MTIP} is NP-complete in 2D and present a polynomial-time 2-approximation algorithm for the problem. 

\begin{theorem}\label{thm:NP}
MTIP in 2D is NP-complete.
\end{theorem}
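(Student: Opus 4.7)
My plan has two parts: first verifying membership in NP, then reducing a known NP-hard problem to $\emph{MTIP}$. Membership is routine. A certificate is a range assignment $\rho$; by a 2D analogue of Observation~\ref{obs:range_ij}, we may restrict each $\rho(p)$ to lie in the finite set $\{|pq| : q \in P\}$ without increasing the total interference, which gives a polynomial-size candidate space. Given such a $\rho$, verifying strong connectivity of $G_\rho$ and computing $I(G_\rho)=\sum_p SI(p)$ both take polynomial time.

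For hardness, I would attempt a reduction from Planar 3-SAT, in the spirit of the NP-hardness proofs for related receiver-centric interference problems (e.g., Buchin~\cite{Buchin08}, Lam et al.~\cite{Lam10}). Given a planar 3-CNF formula $\varphi$ with a rectilinear embedding, I would replace each variable by a small point cluster (a \emph{variable gadget}) whose two cheapest strongly-connected configurations correspond to \emph{True} and \emph{False}; each edge of the embedding by a \emph{wire gadget} consisting of a long sequence of evenly spaced points along which any low-interference strongly connected assignment must propagate a chosen orientation via a near-unique consecutive linking pattern; and each clause by a \emph{clause gadget} that can be stitched into the overall network within a prescribed interference budget $B$ if and only if at least one of its three incoming wires arrives in the satisfying orientation. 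Calibrating $B$ so that $I(G_\rho)\le B$ is achievable iff $\varphi$ is satisfiable completes the reduction.

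The main technical obstacle is controlling \emph{unintended long-range interactions} between distant gadgets: because the interference contributed by a point $p$ equals $|\{q : |pq|\le \rho(p)\}|$, enlarging $\rho(p)$ to reach a far target simultaneously charges $p$ for every intermediate point in its disk. I would exploit this by spacing gadgets far apart and interleaving ''spacer'' points so that any deviation from the intended local behavior, including directed shortcuts that exploit the asymmetry to bypass a gadget, immediately overshoots $B$. A secondary subtlety, specific to the asymmetric model, is that strong connectivity demands directed reachability in \emph{both} directions, so each gadget must be shown to have a unique (up to orientation) cheapest strongly-connected configuration, not merely a cheapest spanning arborescence. I expect the case analysis for this to be the bulk of the work; the non-crossing structural observations developed in Section~\ref{sec:1d} (Lemma~\ref{lemma:noCrossing}) applied locally within each wire should help bound the number of cases and keep the bookkeeping manageable.
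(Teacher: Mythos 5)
Your membership argument is fine and essentially matches the paper's (the observation that ranges may be restricted to the finite set $\{|pq| : q\in P\}$ gives polynomial-size certificates, and verification is routine). The hardness half, however, is a research plan rather than a proof: every load-bearing claim --- that each variable gadget has exactly two cheapest strongly-connected configurations, that wires propagate an orientation via a ``near-unique'' linking pattern, that a clause gadget fits the budget iff a literal arrives satisfied, and that a global budget $B$ can be calibrated while suppressing long-range shortcuts --- is stated as something you \emph{would} show. No gadget is actually specified, so there is nothing to verify, and the two obstacles you correctly identify (a disk charging its owner for every intermediate point it covers, and the need for directed reachability in \emph{both} directions) are exactly where such reductions tend to collapse, so they cannot be deferred to ``the bulk of the work.'' As written, NP-hardness is not established.

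For comparison, the paper takes a much more economical route: a reduction from Hamiltonicity of grid graphs~\cite{Itai82} rather than Planar 3-SAT. Each grid vertex $v_i$ is replaced by five points (a center $c_i$ plus four connectors at distance $1$) on a grid of side length $3.4$; a short counting argument shows that every valid assignment incurs sender interference at least $9$ per gadget, with equality forcing $\r(c_i)<2.4$ and \emph{exactly one} connector to have range in $[1.4,\sqrt{2})$, i.e., exactly one outgoing link to a neighboring gadget. Setting the budget to exactly $9n$ then makes strong connectivity equivalent to the in-degree-one, out-degree-one pattern of a Hamiltonian cycle, and the chosen distances rule out any long-range shortcut within budget --- which disposes of your ``unintended interactions'' concern by construction rather than by case analysis. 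If you want to complete your write-up, I would recommend switching to this style of reduction; clause/variable machinery is considerably harder to control under a summed interference objective than a uniform per-gadget lower bound that is tight only for degree-two connection patterns.
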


\begin{proof}
Given a range assignment $\r$ as a certificate, it is easy to verify in polynomial-time whether the graph induced by $\r$ is strongly connected and whether its total interference is bounded by a given value $I$. This implies that \emph{MTIP} is in $NP$.

To prove hardness of \emph{MTIP}, we show a polynomial-time reduction from the problem of deciding whether a grid graph contains a Hamiltonian cycle, which is known to be NP-hard~\cite{Itai82}. A grid graph is a graph whose vertex set is a subset of the integer grid $\mathbb{Z} \times \mathbb{Z}$, and two vertices are connected by an edge if and only if the distance between them is equal to 1; see Figure~\ref{fig:grid}(a).

Let $G=(V,E)$ be a grid graph, where $V=\{ v_1,v_2,\dots,v_n\}$. We construct in polynomial-time a set $P$ of $5n$ points in the plane, and show that $G$ contains a Hamiltonian cycle iff there exists a range assignment $\r$ such that the induced graph $G_\r$ is strongly connected and $I(G_\r) = 9n$. We assume that the degree of each vertex in $G$ is at least 2, otherwise $G$ cannot contain a Hamiltonian cycle.
\begin{figure}[ht]
	\centering
			\includegraphics[width=0.72\textwidth]{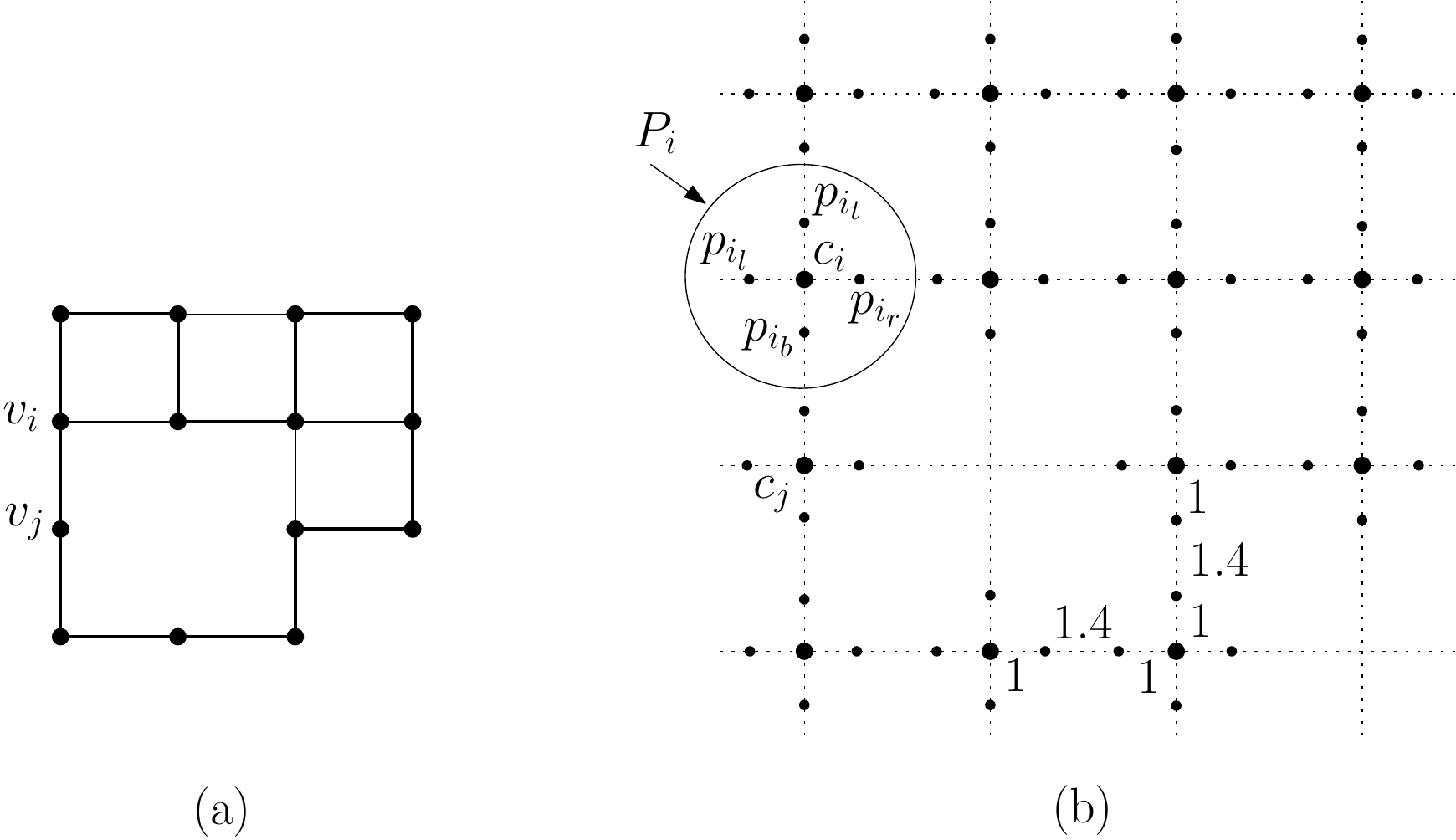}
	\caption{(a) A grid graph $G$. (The bold edges form a Hamiltonian cycle.) (b) The resulting set $P$. Each set $P_i \subseteq P$ corresponds to a vertex $v_i$ in $G$ and consists of 5 points $\{ c_i,p_{i_r}, p_{i_l}, p_{i_t},p_{i_b} \}$.} 
	\label{fig:grid}
\end{figure}

We first transform the vertices of $G$ to a set  $C=\{ c_1,c_2,\dots,c_n\}$ of $n$ points on a grid of side length $3.4$, such that two vertices $v_i$ and $v_j$ are adjacent in $G$ if and only if $c_i$ and $c_j$ (the points corresponding to $v_i$ and $v_j$) are adjacent in the new grid; see Figure~\ref{fig:grid}.
Then, for each point $c_i \in P$, we locate four points $p_{i_r}, p_{i_l}, p_{i_t},$ and $p_{i_b}$ on the grid edges incident to $c_i$, such that the distance between $c_i$ and each one of them is equal to 1; see Figure~\ref{fig:grid}(b). Put $P_i= \{ c_i,p_{i_r}, p_{i_l}, p_{i_t},p_{i_b} \}$. We will refer to $c_i$ as the center of $P_i$ and to the other four points as connectors. Let $P = \underset{v_i \in V}{\bigcup} P_i$ be the resulting set. Clearly, $|P| = 5n$ and $P$ can be constructed in polynomial-time.
%Due to space limitations, the proofs of the following lemma and corollary have been moved to Appendix~\ref{sec:proofs}.

%
\begin{lemma}\label{lemma:int9}
Let $\r$ be a valid range assignment of $P$ and let $G_\r$ be the graph induced by $\r$. Then, for each $1 \le i \le n$, $SI(P_i) = \sum_{p \in P_i}SI(p) \ge 9$. %(and therefore $I(G_\r) \ge 9$).
\end{lemma}
\begin{proof}
Since $\r$ is a valid assignment, for each $p\in P_i$, we have $\r(p) \ge 1$. Hence, $SI(c_i) \ge 4$ and $SI(p) \ge 1$, for each $p \in P_i \setminus \{c_i\}$. Moreover, since $G_\r$ is strongly connected, the transmission range of at least one of the points of $P_i$ must cover at least one point from $P \setminus P_i$, which means that either $\r(c_i) \ge 2.4$ or $\r(p) \ge 1.4$ for at least one connector $p \in P_i$. 
If $\r(c_i) \ge 2.4$, then $SI(c_i) \ge 6$ (since the degree of each vertex in $G$ is at least 2), and therefore, $SI(P_i) \ge 10$; see Figure~\ref{fig:int9}(a). Otherwise, at least one connector $p \in P_i$ has $\r(p) \ge 1.4$. Then, $SI(p) \ge 2$, and therefore $SI(P_i) \ge 9$; see Figure~\ref{fig:int9}(b). 
\end{proof}
\begin{figure}[ht]
	\centering
			\includegraphics[width=0.8\textwidth]{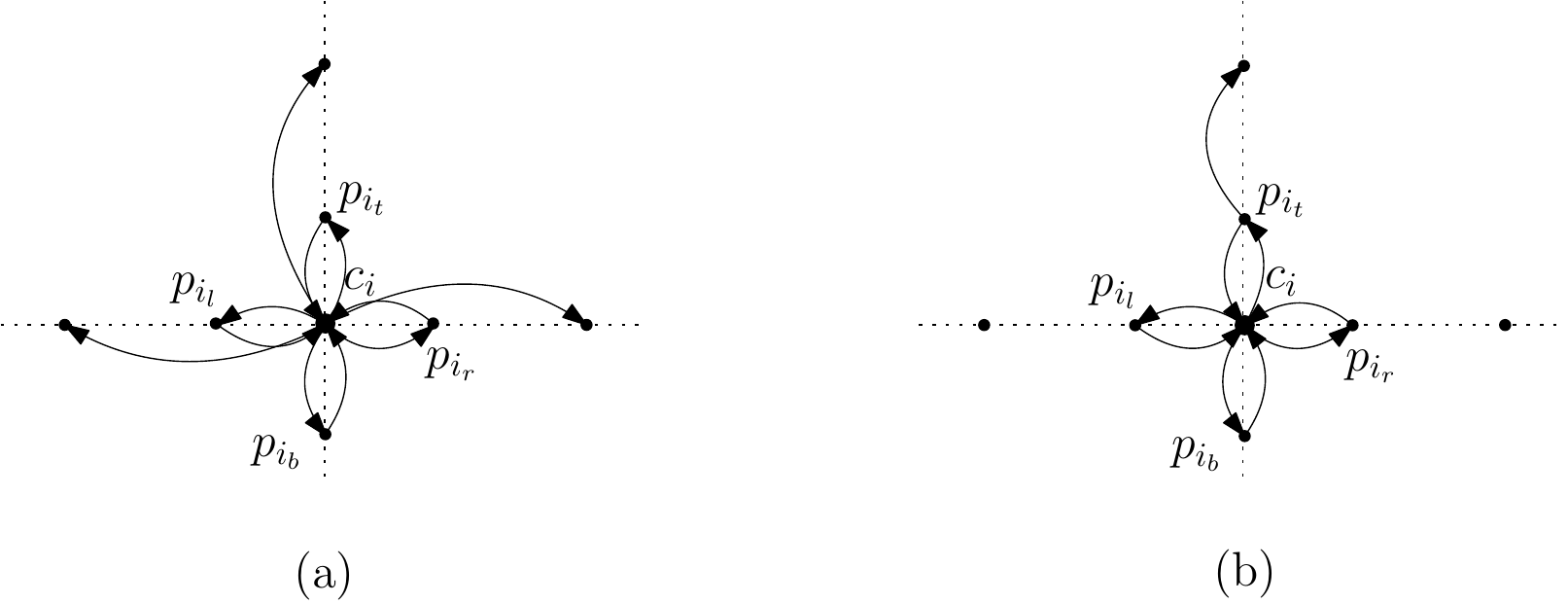}
	\caption{$SI(P_i) \ge 9$.} 
	\label{fig:int9}
\end{figure}

\begin{corollary}\label{cor:only1long}
Let $\r$ be a valid range assignment of $P$. Then, if $SI(P_i) = 9$, for some $1 \le i \le n$, then $\r(c_i) < 2.4$, exactly one connector $p \in P_i$ has $1.4 \le \r(p) < \sqrt{2}$, and each of the other three connectors $p' \in P_i$ has $1 \le \r(p') < 1.4$.
\end{corollary}
\begin{proof}
During the proof of Lemma~\ref{lemma:int9}, we showed that if $\r(c_i) \ge 2.4$, then $SI(c_i) \ge 6$, and hence $SI(P_i) \ge 10$. Thus, $\r(c_i) < 2.4$ and $SI(c_i) = 4$. Moreover, for each point $p \in P_i\setminus \{c_i\}$, $\r(p) \ge 1$, and if $1.4 \le \r(p) < \sqrt{2}$, then $SI(p) = 2$. Since $SI(P_i) = 9$, exactly one point $p \in P_i\setminus \{c_i\}$ has $1.4 \le \r(p) < \sqrt{2}$, which completes the proof of the lemma.
\end{proof}

We now prove the correctness of the reduction. Suppose that $G$ contains a Hamiltonian cycle $C$. We compute a valid range assignment $\r$ to the points of $P$, such that $I(G_\r) =9n$. Consider $C$ as a directed cycle, such that, each vertex in $C$ has in-degree 1 and out-degree 1. For each vertex $v_i$ in $G$ we assign ranges to the points of $P_i$ as follows. We assign 1 to the center $c_i$, assign $1.4$ to one of the connectors (according to the outgoing edge incident to $v_i$ in $C$), and assign 1 to each of the other three connectors. Since $C$ is a Hamiltonian cycle, the graph induced by $\r$ is strongly connected. Moreover, $SI(P_i) = 9$, for each $1 \le i \le n$, and therefore $I(G_\r) =9n$.

Conversely, suppose that there exists a valid range assignment $\r$ to the points of $P$, such that $I(G_\r) =9n$.
By Lemma~\ref{lemma:int9}, $SI(P_i) \ge 9$, for each $1 \le i \le n$. Since $I(G_\r) =9n$, we conclude that $SI(P_i) = 9$, for each $1 \le i \le n$. Moreover, by Corollary~\ref{cor:only1long}, in each set $P_i$, exactly one connector $p$ has $1.4 \le \r(p) < \sqrt{2}$ (where $\r(c_i) < 2.4$ and each of the other connectors $p'$ has $1 \le \r(p') < 1.4$). 
We construct a Hamiltonian cycle $C$ in $G$ as follows. For every two sets $P_i$ and $P_j$, we add the edge $(v_i,v_j)$ to $C$ if and only if the connector $p \in P_i$ assigned range $1.4 \le \r(p) < \sqrt{2}$ covers a point in $P_j$. Thus, $C$ is a subgraph of $G$, and, since $G_\r$ is strongly connected, $C$ is connected. Moreover, since each set $P_i$ has exactly one point which reaches a point not in $P_i$, the degree of each vertex in $C$ is exactly $2$. Therefore, $C$ is a Hamiltonian cycle in $G$.   
\end{proof}

%%%%%%%%%%%%%%%%%%%%%%%%%%%%%%%%%%%%%%%%%%%%%%%%%%%%%%%%%%%%%%%%%%%%%%%%%%%%%%%%%%%%%%%%%%%%%%
\subsection{Approximation Algorithm for \emph{MTIP} in 2D} \label{sec:approx}

%\subsection{Approximation Algorithm in 2D}\label{sec:2D}

Let $P$ be a set of $n$ points in the plane. Let $\r^*$ be an optimal range assignment to the points of $P$. Let $G_{\r^*}$ be the graph induced by $\r^*$, and put $OPT = I(G_{\r^*})$.
In this section, we present a polynomial-time approximation algorithm that computes a valid range assignment $\r$, such that $I(G_\r) \le 2 \cdot OPT$.

%We select an arbitrary point $s \in P$. 
Let $s$ be a point of $P$. 
A \emph{broadcast} tree for $P$ rooted at $s$ is a directed tree rooted at $s$, which contains a directed path from $s$ to each point $p \in P\setminus \{s\}$. 
A \emph{sink} tree for $P$ rooted at $s$ is a directed tree rooted at $s$, which contains a directed path from each point $p \in P\setminus \{s\}$ to $s$. 
We introduce two variants of \emph{MTIP}, namely \emph{MTIP}$_1$ and \emph{MTIP}$_2$. 
In \emph{MTIP}$_1$, the goal is to compute a range assignment $\r_1$ to the points of $P$, such that the graph induced by $\r_1$ contains a broadcast tree $T_{\r_1}$ for $P$ rooted at $s$ of minimum total interference.
And, in \emph{MTIP}$_2$, the goal is to compute a range assignment $\r_2$ to the points of $P$, such that the graph induced by $\r_2$ contains a sink tree $T_{\r_2}$ for $P$ rooted at $s$ of minimum total interference.

Let $\r_1$ and $\r_2$ be optimal range assignments for \emph{MTIP}$_1$ and \emph{MTIP}$_2$, respectively, and let $T_{\r_1}$ and $T_{\r_2}$ be the corresponding broadcast and sink trees.
We compute a new range assignment $\r$ as follows. For each point $p \in P$, we set $\r(p) = \max\{ \r_1(p),\r_2(p)\}$.
Let $G_\r$ be the graph induced by $\r$. Then, $G_\r$ is strongly connected, since given two points $p,q \in P$, one can get from $p$ to $q$ by first following the directed path in $T_{\r_2}$ from $p$ to $s$ and then following the directed path in $T_{\r_1}$ from $s$ to $q$. In the next lemma we bound $I(G_\r)$.

\begin{lemma} \label{lemma:2opt}
$I(G_\r) \le 2 \cdot OPT$.
\end{lemma}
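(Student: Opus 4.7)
The plan is to prove the inequality by combining two ingredients: first, both $I(G_{\r_1})$ and $I(G_{\r_2})$ are individually bounded by $OPT$; and second, the pointwise ``max'' definition of $\r$ lets the interference at each point be charged to the sum of interferences at the same point under $\r_1$ and $\r_2$.

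First I would argue that $I(G_{\r_1}) \le OPT$ and $I(G_{\r_2}) \le OPT$. Since $G_{\r^*}$ is strongly connected, it contains (via BFS from $s$) a broadcast tree $T'$ rooted at $s$; define $\r_1'(p)$ to be the distance from $p$ to its farthest child in $T'$ (and $0$ if $p$ is a leaf). Every child of $p$ is reachable from $p$ in $G_{\r^*}$, hence $\r_1'(p) \le \r^*(p)$ for every $p \in P$. Consequently, for each $p$ the sender-centric interference satisfies $SI_{\r_1'}(p) \le SI_{\r^*}(p)$, and summing gives $I(G_{\r_1'}) \le I(G_{\r^*}) = OPT$. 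Since $\r_1'$ is feasible for \emph{MTIP}$_1$ and $\r_1$ is optimal, $I(G_{\r_1}) \le I(G_{\r_1'}) \le OPT$. A symmetric argument with BFS in the reverse graph yields a sink tree and $I(G_{\r_2}) \le OPT$.

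Next I would use the definition $\r(p) = \max\{\r_1(p), \r_2(p)\}$ to get a pointwise bound. For each $p$, the set of points covered by $p$'s signal under $\r$ equals whichever of the two covered sets is larger, so
\[
SI_\r(p) = \max\{SI_{\r_1}(p),\, SI_{\r_2}(p)\} \le SI_{\r_1}(p) + SI_{\r_2}(p).
\]
Summing over all $p \in P$ and using $I(G_\r) = \sum_p SI_\r(p)$ together with the two bounds from the previous paragraph yields
\[
I(G_\r) \;\le\; \sum_{p \in P} SI_{\r_1}(p) + \sum_{p \in P} SI_{\r_2}(p) \;=\; I(G_{\r_1}) + I(G_{\r_2}) \;\le\; 2 \cdot OPT,
\]
which is the desired inequality.

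The only subtle step is the first one: one must verify that truncating $\r^*$ to the minimum ranges needed to realize an extracted broadcast/sink tree produces a pointwise smaller (hence no more interfering) assignment that is still feasible for the corresponding auxiliary problem. Everything else is a one-line pointwise comparison followed by summation, using that sender-centric interference is monotone in the range.
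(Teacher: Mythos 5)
Your proof is correct and takes essentially the same route as the paper's: extract a broadcast tree and a sink tree from $G_{\r^*}$ to bound the optimal values of \emph{MTIP}$_1$ and \emph{MTIP}$_2$ by $OPT$, then bound $I(G_\r)$ by $I(G_{\r_1})+I(G_{\r_2})$. You simply make explicit two steps the paper labels as clear, namely that truncating $\r^*$ to realize the extracted trees only decreases each sender-centric interference, and that the pointwise max assignment satisfies $SI_\r(p)\le SI_{\r_1}(p)+SI_{\r_2}(p)$.
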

\begin{proof}
Consider the graph $G_{\r^*}$. Since $G_{\r^*}$ is strongly connected, there exists a directed path from $s$ to each of the points in $P \setminus \{c\}$ and vice versa. Thus, $G_{\r^*}$ contains broadcast and sink trees rooted at $s$. Let $T_1$ and $T_2$ be such broadcast and sink trees, respectively. Clearly, $I(T_1) \le OPT$ and $I(T_2) \le OPT$. Since $T_{\r_1}$ is a broadcast tree of minimum total interference and $T_{\r_2}$ is a sink tree of minimum total interference, we have $I(T_{\r_1}) \le I(T_1)$ and $I(T_{\r_2}) \le I(T_2)$. Moreover, since $I(G_\r) \le I(T_{\r_1}) + I(T_{\r_2})$, we have $I(G_\r) \le 2 \cdot OPT$.
\end{proof}

We now show how to solve \emph{MTIP}$_1$ and \emph{MTIP}$_2$ optimally, and therefore, by Lemma~\ref{lemma:2opt}, we can obtain a valid range assignment $\r$, such that $I(G_\r) \le 2 \cdot OPT$.

%\vspace{-0.18cm}
\subsubsection*{Solving \emph{MTIP}$_1$}
An optimal range assignment $\r_1$ for \emph{MTIP}$_1$ can be found easily. We assign $s$ the range $\r_1(s)=\max_{p \in P}|sp|$, and for each point $p \in P \setminus \{s\}$, we assign $p$ the range $\r_1(p) = 0$. Clearly, $\r_1$ induces a broadcast tree $T_{\r_1}$ rooted at $s$ and $I(T_{\r_1}) = n-1$, which is optimal.  

%\vspace{-0.18cm}
\subsubsection*{Solving \emph{MTIP}$_2$}
The algorithm in this case is more involved. 
Let $G = (P,E)$ be the complete weighted directed graph on $P$, such that the weight of $(p,q) \in E$ is $w(p,q)=|\{ z\in P\setminus \{p\}: |pz| \le |pq| \}|$.
Let $T = (P,E_T)$ be a sink tree rooted at $s$ in $G$, and let $w(T) = \sum_{(p,q) \in E_T} w(p,q)$ denote its weight.  
Let $\r_2$ be a range assignment to the points of $P$, such that the induced graph contains a sink tree $T_{\r_2}$ rooted at $s$. 
Observe that if $\r_2$ is an optimal range assignment for \emph{MTIP}$_2$, then the induced graph contains a unique sink tree $T_{\r_2}$ rooted at $s$. The following lemma generalizes Lemma~\ref{lemma:optsink} and its proof is identical.
\begin{lemma} \label{lemma:optMTIP2}
$\r_2$ is an optimal range assignment for MTIP$_2$ if and only if $T_{\r_2}$ is a minimum-weight sink tree rooted at $s$ in $G$.
\end{lemma}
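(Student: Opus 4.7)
The plan is to follow the argument of Lemma~\ref{lemma:optsink} essentially verbatim, since that proof never actually used the 1D interval structure or the fact that the root was an endpoint of $P_{[i,j]}$. All it really used was that every non-root node in a sink tree has a unique outgoing edge (toward its parent in the arborescence) and that the edge weights on $G$ were defined exactly so that $w(p,q) = SI(p)$ when $\r(p) = |pq|$. Both properties transfer unchanged to the general setting here.

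First I will establish the lower bound: for any range assignment $\r_2$ whose induced graph contains a sink tree $T_{\r_2}$ rooted at $s$, every non-root $p$ must satisfy $\r_2(p) \ge |p\,q|$, where $q$ is the parent of $p$ in $T_{\r_2}$, and so $SI_{\r_2}(p) \ge w(p,q)$. Summing over all $p \in P$ yields $I(G_{\r_2}) \ge w(T_{\r_2})$, with equality precisely when ranges are tight against the tree edges. Next I will show the bound is attained constructively: given any sink tree $T$ rooted at $s$, define $\r(s) = 0$ and $\r(p) = |p\,\mathrm{parent}_T(p)|$ for every non-root $p$. By construction the induced graph contains $T$, and by the definition of $w$ the interference at each non-root $p$ equals $w(p,\mathrm{parent}_T(p))$, giving $I(G_\r) = w(T)$.

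Combining the two gives the biconditional. If $T_{\r_2}$ is a minimum-weight sink tree rooted at $s$, then tightening $\r_2$ yields $I(G_{\r_2}) = w(T_{\r_2})$, while for any competing valid assignment $\r'$ inducing a sink tree $T_{\r'}$ we have $I(G_{\r'}) \ge w(T_{\r'}) \ge w(T_{\r_2})$, so $\r_2$ is optimal. Conversely, if $\r_2$ is optimal then, as in the 1D case, one argues that its induced graph contains a unique sink tree $T_{\r_2}$ (ranges are tight), whence $I(G_{\r_2}) = w(T_{\r_2})$; were $T_{\r_2}$ not a minimum-weight sink tree, the construction above applied to a lighter sink tree would produce a range assignment of strictly smaller total interference, contradicting optimality. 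There is no real obstacle — the argument is structurally identical to Lemma~\ref{lemma:optsink}; the only point worth noting is that the weight-to-interference identity is local to each non-root vertex and therefore indifferent to the ambient geometry or to how the sink tree is rooted.
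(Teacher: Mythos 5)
Your proposal is correct and follows exactly the approach the paper intends: the paper states that the proof is identical to that of Lemma~\ref{lemma:optsink}, which is precisely the two-directional argument you give (tight ranges realize $I = w(T)$ for any sink tree, and any assignment inducing a sink tree has interference at least the tree's weight). Your version is, if anything, slightly more explicit about the lower-bound direction, but there is no substantive difference.
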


By Lemma~\ref{lemma:optMTIP2}, to compute an optimal range assignment $\r_2$ for \emph{MTIP}$_2$, it is sufficient to compute a minimum-weight sink tree rooted at $s$ in $G$.
We compute a minimum-weight sink tree rooted at $s$ in $G$ using Edmonds' algorithm~\cite{Edmonds67} (for finding minimum directed spanning trees in directed graphs). 

We construct the (inverse) complete weighted directed graph $G' = (P,E')$, such that the weight of $(p,q) \in E'$ is $w'(p,q) = w(q,p) = |\{ z\in P\setminus \{q\}: |qz| \le |qp| \}|$. 
%Let $T$ be a broadcast tree rooted at $s$ in $\stackrel{\leftarrow}{G}$ and let $T'$ be the tree obtained by inverting the edges of $T$.
%, $T'$ is a sink tree rooted at $s$ in $\stackrel{\rightarrow}{G}$ and $\stackrel{\leftarrow}{w}(T) = \stackrel{\rightarrow}{w}(T')$.
Clearly, $T'$ is a broadcast tree rooted at $s$ in $G'$ of weight $W$ if and only if $T$ (the tree obtained by inverting the edges of $T'$) is a sink tree rooted at $s$ in $G$ of weight $W$.
Therefore, in order to compute a minimum-weight sink tree rooted at $s$ in $G$, it suffices to compute a minimum-weight broadcast tree rooted at $s$ in $G'$.
Since a minimum-weight broadcast tree rooted at $s$ in $G'$ can be computed in $O(n^2)$ time~\cite{Tarjan77} using Edmonds' algorithm, we can compute a minimum-weight sink tree rooted at $s$ in $G$ in $O(n^2)$ time. Moreover, since $G$ and $G'$ can be constructed in $O(n^2)$ time, we can solve \emph{MTIP}$_2$ in $O(n^2)$ time (applying Lemma~\ref{lemma:optMTIP2}).

The following theorem summarizes the result of this section.
\begin{theorem} \label{thm:2approx}
Let $P$ be a set of $n$ points in the plane. Then, one can compute in $O(n^2)$ time a valid range assignment $\r$ to the points of $P$, such that the graph $G_\r$ induced by $\r$ is strongly connected and its total interference is at most $2 \cdot OPT$.
\end{theorem}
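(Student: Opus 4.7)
The plan is to assemble the pieces developed earlier in Section~\ref{sec:approx} into a single algorithm and verify that its running time is $O(n^2)$. Pick an arbitrary anchor point $s \in P$; the approximation will be built from the union of a broadcast tree and a sink tree rooted at $s$, combined by taking the pointwise maximum of the two range assignments. The correctness of this combination, together with the $2 \cdot OPT$ interference bound, has already been established in Lemma~\ref{lemma:2opt}, so what remains is only to specify how to produce the two ingredient trees within the claimed time bound.

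For \emph{MTIP}$_1$, I would use the trivial one-hop broadcast: set $\r_1(s) = \max_{p \in P} |sp|$ and $\r_1(p) = 0$ for every $p \neq s$. This already appears above and is clearly optimal (since any broadcast tree must have at least $n-1$ incoming reaches, so any sender-interference sum is at least $n-1$). Computing $\r_1$ takes $O(n)$ time.

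For \emph{MTIP}$_2$, the main substance, I would invoke Lemma~\ref{lemma:optMTIP2}: it suffices to produce a minimum-weight sink tree rooted at $s$ in the complete directed weighted graph $G = (P,E)$ with edge weights $w(p,q) = |\{z \in P \setminus \{p\} : |pz| \le |pq|\}|$. To apply Edmonds' minimum directed spanning tree algorithm, I would build the reverse graph $G' = (P,E')$ where $w'(p,q) = w(q,p)$, so that a minimum-weight broadcast (out-arborescence) tree rooted at $s$ in $G'$ corresponds under edge reversal to a minimum-weight sink (in-arborescence) tree rooted at $s$ in $G$. Computing all pairwise weights and constructing $G'$ requires $O(n^2)$ time (for each $p$, sort the distances to all other points), and Tarjan's $O(n^2)$ implementation of Edmonds' algorithm then yields the desired arborescence within the same bound. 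Reading off $\r_2$ from the resulting tree—each non-root $p$ gets range equal to the distance to its unique out-neighbor in the sink tree—takes $O(n)$ time.

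Finally, I would set $\r(p) = \max\{\r_1(p), \r_2(p)\}$ for each $p$, which is again $O(n)$. Strong connectivity of $G_\r$ is immediate: for any $p,q$, follow the sink-tree path from $p$ to $s$ (available because $\r(p) \ge \r_2(p)$) and then the broadcast-tree path from $s$ to $q$ (available because $\r(s) \ge \r_1(s)$). The interference bound $I(G_\r) \le 2 \cdot OPT$ follows from Lemma~\ref{lemma:2opt} with no additional work. Summing the steps gives total running time $O(n^2)$, proving the theorem. The only non-routine component is the reduction to Edmonds' algorithm in \emph{MTIP}$_2$, but once Lemma~\ref{lemma:optMTIP2} is in hand this is essentially bookkeeping; I expect no serious obstacle.
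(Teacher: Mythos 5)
Your proposal is correct and follows essentially the same route as the paper: the trivial one-hop solution for \emph{MTIP}$_1$, the reduction of \emph{MTIP}$_2$ to a minimum-weight sink tree via Lemma~\ref{lemma:optMTIP2} solved by Edmonds'/Tarjan's algorithm on the reversed graph, the pointwise-maximum combination, and Lemma~\ref{lemma:2opt} for the factor-$2$ bound. The only (shared, minor) imprecision is the claim that all edge weights can be computed in $O(n^2)$ time, since the natural sort-by-distance approach gives $O(n^2\log n)$; this does not affect the correctness of the approximation guarantee.
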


%\newpage

%%%%%%%%%%%%%%%%%%%%%%%%%%%%%%%%%%%%%%%%%%%%%%%%%%%%%%%%%%%%%%%%%%%%%%%%%%%%%%
%%%%%%%%%%%%%%%%%%%%%%%%%%%%%%%%%%%%%%%%%%%%%%%%%%%%%%%%%%%%%%%%%%%%%%%%%%%%%%
\bibliographystyle{plain}
%\bibliography{ref}

\begin{thebibliography}{10}

\bibitem{Agrawal13}
P.~Agrawal and G.~K. Das.
\newblock Improved interference in wireless sensor networks.
\newblock In {\em ICDCIT}, pages 92--102, 2013.

\bibitem{Bilo08}
D.~Bil{\`o} and G.~Proietti.
\newblock On the complexity of minimizing interference in ad-hoc and sensor
  networks.
\newblock {\em Theor. Comput. Sci.}, 402(1):43--55, 2008.

\bibitem{Brise15}
Y.~Brise, K.~Buchin, D.~Eversmann, M.~Hoffmann, and W.~Mulzer.
\newblock Interference minimization in asymmetric sensor networks.
\newblock pages 136--151, 2014.

\bibitem{Buchin08}
K.~Buchin.
\newblock Minimizing the maximum interference is hard.
\newblock {\em CoRR}, abs/0802.2134, 2008.

\bibitem{Burkhart}
M.~Burkhart, P.~von Rickenbach, R.~Wattenhofer, and A.~Zollinger.
\newblock Does topology control reduce interference?
\newblock In {\em MOBIHOC}, pages 9--19, 2004.

\bibitem{Edmonds67}
J.~Edmonds.
\newblock Optimum branchings.
\newblock {\em J. Res. Nat. Bur. Standards}, 71B(4):233--240, 1967.

\bibitem{Estrin}
D.~Estrin, R.~Govindan, J.~Heidemann, and S.~Kumar.
\newblock Next century challenges: Scalable coordination in sensor networks.
\newblock In {\em MOBICOM}, pages 263--270, 1999.

\bibitem{Fussen05}
M.~Fussen, R.~Wattenhofer, and A.~Zollinger.
\newblock Interference arises at the receiver.
\newblock In {\em WirelessComBur}, 2005.

\bibitem{Halldorsson08}
M.~Halld{\'o}rsson and T.~Tokuyama.
\newblock Minimizing interference of a wireless ad-hoc network in a plane.
\newblock {\em Theor. Comput. Sci.}, 402(1):29--42, 2008.

\bibitem{Hubaux}
J.-P. Hubaux, T.~Gross, J.-Y. L.~Boudec, and M.~Vetterli.
\newblock Towards self-organized mobile ad-hoc networks: {T}he terminodes
  project.
\newblock {\em IEEE Commun.Magazine}, 39(1):118--124, 2001.

\bibitem{Itai82}
A.~Itai, C.~H. Papadimitriou, and J.~L. Szwarcfiter.
\newblock Hamilton paths in grid graphs.
\newblock {\em {SIAM} J. Comput.}, 11(4):676--686, 1982.

\bibitem{Lam10}
N.~X. Lam, T.~N. Nguyen, and D.~T. Huynh.
\newblock Minimum total node interference in wireless sensor networks.
\newblock In {\em ADHOCNETS}, pages 507--523, 2010.

\bibitem{Moaveni-nejad}
K.~Moaveni-nejad and X.~Y. Li.
\newblock Low-interference topology control for wireless ad hoc networks.
\newblock {\em Ad Hoc and Sensor Wireless Networks}, 1:41--64, 2005.

\bibitem{Moscibroda05}
T.~Moscibroda and R.~Wattenhofer.
\newblock Minimizing interference in ad hoc and sensor networks.
\newblock In {\em DIALM-POMC}, pages 24--33, Cologne, Germany, 2005.

\bibitem{Tan11}
H.~Tan, T.~Lou, Y.~Wang, Q.{-}S. Hua, and F.~C.~M. Lau.
\newblock Exact algorithms to minimize interference in wireless sensor
  networks.
\newblock {\em Theor. Comput. Sci.}, 412(50):6913--6925, 2011.

\bibitem{Tarjan77}
R.~E. Tarjan.
\newblock Finding optimum branchings.
\newblock {\em Networks}, 7(1):25--35, 1977.

\bibitem{Rickenbach05}
P.~von Rickenbach, S.~Schmid, R.~Wattenhofer, and A.~Zollinger.
\newblock A robust interference model for wireless ad-hoc networks.
\newblock In {\em IPDPS}, 2005.

\bibitem{Rickenbach09}
P.~von Rickenbach, R.~Wattenhofer, and A.~Zollinger.
\newblock Algorithmic models of interference in wireless ad hoc and sensor
  networks.
\newblock {\em {IEEE/ACM} Trans. Netw.}, 17(1):172--185, 2009.

\end{thebibliography}

\end{document}